\definecolor{shg}{rgb}{0.55,0.55,0.55}
\definecolor{hg}{rgb}{0.65,0.65,0.65}
\definecolor{mg}{rgb}{0.75,0.75,0.75}
\definecolor{dg}{rgb}{0.85,0.85,0.85}
\definecolor{wh}{rgb}{1,1,1}
\definecolor{sdg}{rgb}{0.95,0.95,0.95}
\def\hbm{\mbox{}\hfill$\boxdot$}
\def\R{\ensuremath{I\!\!R}}
\def\Rnn{\ensuremath{I\!\!R_{\geq 0}}}
\def\Rp{\ensuremath{\R_{> 0}}}
\newtheorem{definition}{Definition}[section]
\newtheorem{remark}[definition]{Remark}
\newtheorem{theorem}[definition]{Theorem}
\newtheorem{lem}[definition]{Lemma}
\newtheorem{prop}[definition]{Proposition}
\newtheorem{coro}[definition]{Corollary}
\newtheorem{example}[definition]{Example}
\DeclareMathOperator{\sign}{sign}
\DeclareMathOperator{\im}{im}
\DeclareMathOperator{\col}{col}
\DeclareMathOperator{\diag}{diag}
\DeclareMathOperator{\supp}{supp}
\newcommand{\brac}[1]{\ensuremath{\left(#1\right)}}
\newcommand{\sig}[1]{\ensuremath{\sign\brac{#1}}}
\newcommand{\uu}[1]{\ensuremath{\underline{#1}}}
\newcommand{\Si}[1]{\ensuremath{S^{\left(#1\right)}}}
\newcommand{\Yi}[1]{\ensuremath{{\mathcal{Y}}^{\left(#1\right)}}}
\newcommand{\Mi}[1]{\ensuremath{M^{\brac{#1}}}}
\newcommand{\Zi}[1]{\ensuremath{Z^{\brac{#1}}}}
\newcommand{\Ei}[1]{\ensuremath{E^{\brac{#1}}}}
\newcommand{\ei}[1]{\ensuremath{e^{\brac{#1}}}}
\newcommand{\sigi}[1]{\ensuremath{\sigma^{\brac{#1}}}}
\newcommand{\PPi}[1]{\ensuremath{\Pi^{\brac{#1}}}}
\newcommand{\e}[1]{\ensuremath{e^{\brac{#1}}}}
\newcommand{\Yo}[1]{\ensuremath{Y_0\brac{#1}}}
\newcommand{\rbrac}[1]{\left[#1\right]}
\newcommand{\ri}[3]{\ensuremath{r^{\brac{#1}}}\brac{#2,\, #3}}
\newcommand{\matMi}[1]{\ensuremath{\mathcal{M}^{\brac{#1}} }}
\begin{document}
\captionsetup[subfloat]{farskip=0pt} 
\title{Multistationarity in sequential distributed multisite
  phosphorylation networks}

\author{Katharina Holstein, Dietrich Flockerzi and Carsten Conradi}
{\address{
    Max-Planck-Institut Dynamik komplexer technischer Systeme, 
    Sandtorstr.\ 1, 39106 Magdeburg, Germany.}}

\email{
  holstein@mpi-magdeburg.mpg.de,flockerzi@mpi-magdeburg.mpg.de,conradi@mpi-magdeburg.mpg.de}

\begin{abstract}
  Multisite phosphorylation networks are encountered in many
  intracellular processes like signal transduction, cell-cycle control or
  nuclear signal integration. In this contribution networks describing the
  phosphorylation and dephosphorylation of a protein at $n$ sites in a
  sequential distributive mechanism are
  considered. Multistationarity (i.e.\ the existence of at least two
  positive steady state solutions of the associated polynomial
  dynamical system) has been {analyzed} and established in several
  contributions. It is, for example, known that there exist values for
  the rate constants where multistationarity occurs. However, nothing
  else is {known} about these rate constants. 

  Here we present a sign condition that is necessary and sufficient
  for multistationarity in $n$-site sequential, distributive
  phosphorylation. We express this sign condition in terms of linear
  systems and show that solutions of these systems define rate
  constants where multistationarity is possible. We then present, for
  $n\geq 2$, a collection of {\em feasible} linear systems and hence
  give a new and independent proof that multistationarity is possible
  for $n\geq 2$. Moreover, our results allow to explicitly obtain
  values for the rate constants where multistationarity is
  possible. Hence we believe that, for the first time, a systematic
  exploration of the region in parameter space where multistationarity
  occurs has become possible.
  One consequence of our work is that, for any pair of steady
  states, the ratio of the steady state concentrations of
  kinase-substrate complexes  equals that of phosphatase-substrate
  complexes.
  
  \textbf{Keywords:} sequential distributed phosphorylation;
  mass-action kinetics; multistationarity; sign condition; rate
  constants
\end{abstract}

\maketitle
\section{Introduction}

Protein phosphorylation and dephosphorylation is frequently
encountered in procaryotic and in eukaryotic cells. Thereby the
majority of proteins is phosphorylated at more than one
phosphorylation site (with most procaryotic proteins having $n\leq 7$
phosphorylation sites and eukaryotic proteins generally having a
large number of phosphorylation sites -- certain eukaryotic proteins
have up to $n=150$ sites \cite{MT-09}). Multisite
phosphorylation networks are consequently encountered in many
intracellular processes, including, among others, the following:
(i) signal transduction networks from the family of Mitogen Activated
Kinases (MAPK cascades), where the core module consists of one protein
phosphorylated at a single site and two proteins each phosphorylated
at two sites (see e.g.\ \cite{sig-016,sig-014,sig-051,sig-052}); (ii)
check-points in cell-cycle control, for example the control of the
G1/S transition in {budding} yeast involving the protein \emph{Sic1}
with $n=9$ (see e.g.\ \cite{cyc-022,cyc-004,jjt-08}) and (iii) nuclear
signal integration, for example, in mammalian cells involving
members of the {\em NFAT} transcription factor family with $n=13$ (see
e.g.\ \cite{MTC-09,NFAT-007,NFAT-002,NFAT-003,TT03}).

In this contribution we focus on networks describing the
phosphorylation and dephosphorylation of a single protein $A$ at $n$
sites in a sequential distributive mechanism (cf.\
Figure~\ref{fig-1} and, for example, \cite{ptm-001}). Such a mechanism
is able to cover the essential aspects of proteins like the
aforementioned \emph{Sic1} or the members of the \emph{NFAT} family
\cite{CS07,sig-041}.
\begin{figure}[ht!]
  \centering
  \includegraphics[width=1\textwidth]{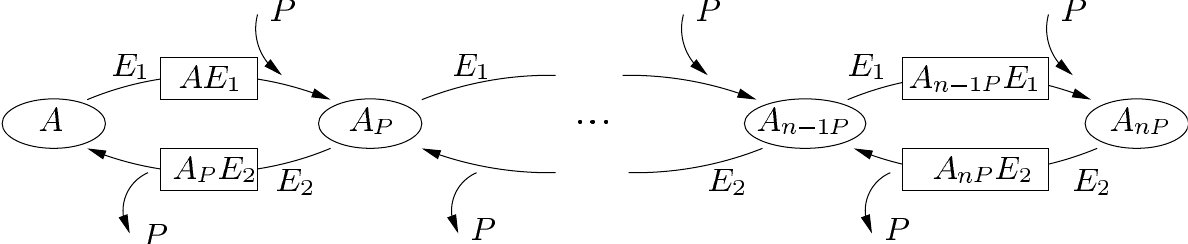}
  \caption{
    Network describing the sequential distributive
    phosphorylation and dephosphorylation of protein $A$ at
    $n$-sites by a kinase $E_1$ a phosphatase $E_2$. The
    phosphorylated forms of $A$ are denoted by the subscript $nP$
    (denoting the number of phosphorylated sites).
  }
  \label{fig-1}
\end{figure}

Due to the biological significance of sequential distributive
phosphorylation, mathematical models have been studied in a variety of
publications. Several publications deal with double
phosphorylation (i.e.\ $n=2$)  and establish multistationarity
\cite{cc-08,cc-05,fein-018,ptm-007} (i.e.\ the existence of at least two
positive steady state solutions, cf.\ Definition~\ref{defimulti}) and
bistability \cite{NM-04} (i.e.\ the existence of two (asymptotically)
stable steady state solutions). Networks with $n$ phosphorylation sites
have been studied, for example, in \cite{CS07,sig-041}, where
bistability is demonstrated numerically and in \cite{JG05}, where it
is argued that even though bistability is in principle possible, it is
unlikely to occur in-vivo. 

The set of positive steady states of networks with $n$ phosphorylation 
sites has been studied algebraically in \cite{ptm-006} and in
\cite{ToricRN}. In \cite{ptm-006} the set of positive steady
states of the more general class of so called post translational
modification networks (with $n$-site sequential distributive
phosphorylation as a special case) is {analyzed}. Such networks admit a
rational parameterization of the set of positive steady states. And in
\cite{ToricRN} it is shown that for $n$-site sequential distributive
phosphorylation this set is defined by a binomial ideal. Hence $n$-site 
sequential distributive phosphorylation is an 
instance of a Chemical Reaction System with toric steady states. In
\cite{ToricRN} a condition for multistationarity in these reaction
systems is given in terms of a sign condition on elements of two linear
subspaces. Based on the results presented in \cite{fein-043,cc-08},
this condition can be restated in terms of linear systems: to decide
the sign condition, a large set of linear systems has to be examined
and multistationarity is possible, if and only if at least one of
these is feasible (cf.\ \cite[Theorem~5.5]{ToricRN} and
\cite[Lemma~2~\&{~}Theorem~2]{fein-043}).

The number of steady states in general $n$-site sequential
distributive phosphorylation has been studied in
\cite{multi-001}. There upper and lower bounds have been established:
there exist parameter values such that, for $n$ even (odd), there are
at least $n+1$ ($n$) steady states. And for all (positive) parameter
values there are at most $2\ n-1$ steady states. Hence
multistationarity has been established for sequential distributive
phosphorylation in \cite{multi-001}. However, no information other
than existence is given about the parameter values where
multistationarity is possible. In this contribution we address this
question: we present for every $n\geq 2$ a collection of {\em feasible
  linear systems} and show that every solution of one of those systems
defines parameter values where multistationarity is possible (together
with two positive steady states as witness). Thus these results not
only constitute a new and independent proof that multistationarity is
possible for $n\geq 2$, they additionally enable, for the first time,
a systematic exploration of the region in parameter space where
multistationarity is possible. Due to the ubiquity of multisite
phosphorylation this may be of potential interest to researchers
working in many fields of (quantitative) biology. For the purpose of
finding parameter values in biologically meaningful ranges we will
start with this exploration in \cite{kh-13}.

We arrive at our main result on the basis of our previous work from
\cite{fein-043,cc-08} and \cite{ToricRN}: as in the aforementioned
references we derive a sign condition that is necessary and sufficient
for the existence of multistationarity (Theorem~\ref{theo:poly_sol} \&
\ref{theo:multi_linear_problem}). In contrast to
\cite{fein-043,ToricRN} our condition is formulated in terms of
the sign patterns generated by two linear subspaces (given as the
image of two matrices defined in eqns.\ (\ref{eq:S_direct}) \&
(\ref{eq:def_Mn})). We then exploit the fact that every sign pattern
uniquely defines a linear system to restate the sign condition in
terms of these linear systems: multistationarity is possible, if and
only if at least one out of 
$\frac{1}{2}\, \big(3^{\brac{3n+3}}-1\big)$
or
$2^{\brac{3n+2}}$
linear systems is feasible
depending on whether some or all components of the two 
witness steady
state
vectors have to differ to qualify for multistationarity,
cf.~Definition~\ref{defimulti} and Remark~\ref{rem:matro-condi}. 
For the latter case, we present formulae to construct 
all sign patterns defining
{\em feasible} linear systems
(cf.~Theorem~\ref{theo:feasible_patterns}) and show that every solution
of one of these linear systems defines parameter  values where
multistationarity occurs with two positive steady states differing in
every component.
We show that there are ${2\, (n-1)(n+2)}$ 
such sign patterns (cf.~Proposition~\ref{lem:nr_of_elements}).

This paper is organized as follows: Section~\ref{sec:notation}
introduces some of the basic notations. 
In
Section~\ref{sec:models} a mass action network is derived from
Fig.~\ref{fig-1} and the associated dynamical system is
defined
along with a formal definition of multistationarity.
In Section~\ref{sec:results} the main results address
multistationarity in terms of sign patterns  defining
feasible linear systems
(cf.~Theorem~\ref{theo:poly_sol} \& \ref{theo:feasible_patterns}).
The proofs of two of these results have
been relegated to separate Sections~\ref{sec:proof_poly_sol} \&
\ref{sec:proof_theo_patterns} for easier reading.

\vspace*{5mm}\section{Notation}
\label{sec:notation}

We use the symbol $\R^m$ to denote Euclidian $m$-space, the symbol
$\Rnn^m$ to denote the nonnegative orthant and $\Rp^m$ to denote the
interior of the nonnegative orthant. Vectors are considered as column
vectors and, for convenience, usually displayed as row vectors using
$^T$ to denote the transpose. For example,
$x\in \R^m$ will usually be displayed as $\brac{x_1,\, \ldots,\,    x_m}^T$.

For positive vectors $x\in\Rp^m$ we use the shorthand notation $\ln x$
to denote
\begin{subequations}
  \begin{align}
    \label{eq:def_ln_x}
    \ln x &:= \brac{\ln x_1,\, \ldots,\, \ln x_m}^T\, \in \, \R^m. \\
    \intertext {Similarly, for $x\in\R^m$, we use $e^x$ to denote}
    e^x &:= \brac{e^{x_1},\, \ldots,\, e^{x_m}}^T\, \in \, \Rp^m
    \intertext{and, for $x\in\R^m$ with $x_i\neq 0$, $i=1$, $\ldots$, $m$,}
    x^{-1} &:= \brac{\frac{1}{x_1},\, \ldots,\, \frac{1}{x_m}}^T\, \in \, \R^m\, .
  \end{align}
Finally, $x^y$ with $x$, $y\in\Rnn^m$ will be defined by
\begin{equation}\label{eq:def_xy}
  x^{y} := \prod_{i=1}^m x_i^{y_i}\, \in \, \Rnn^m\, .
\end{equation}\end{subequations}
The set of real $(p\times q)$-matrices with $p$ rows
and $q$ columns will be denoted by $\R^{p\times q}$. For
$p$-dimensional column vectors $s_{(j)}=(s_{1j},...,s_{pj})^T\in
\R^p$, $j=0,1,...,q$, we denote the $p(q+1)$-dimensional column
vector $(s_{10},...,s_{p0},s_{11},...,s_{p1},\, .......\,
,s_{1q},...,s_{pq})^T$ by 
\begin{equation}\label{eq:def_col}
  \col(s_{(0)},...,s_{(q)})\, \in \, \R^{p(q+1)} \ \mbox{ with the matrix }
  (s_{(0)},...s_{(q)})\, \in \, \R^{p\times (q+1)}\ .
\end{equation}

\vspace*{5mm}\section{Mathematical models of $n$-site  sequential distributive
  phosphorylation}
\label{sec:models}

Here we first introduce the notation used to describe dynamical systems
defined by mass action networks by means of the example network
depicted in Fig.~\ref{fig:n=1_exa}. Then we discuss the mass action network derived
from Fig.\ref{fig-1} and present the dynamical system defined by
this mass action network.

\begin{figure}[!htb]
  \centering
  \subfloat[$1$-site phosphorylation\label{fig:1-site}]{
    \includegraphics[width=.4\linewidth]{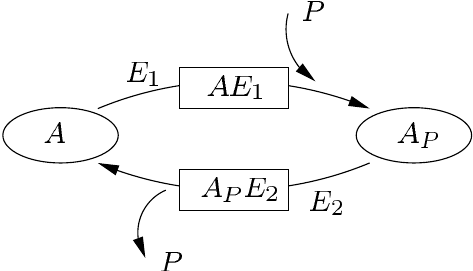}
  }
  \hfill
  \subfloat[Mass action network \label{eq:net_phospho_1}]{
    \begin{minipage}[b]{.4\linewidth}
      \begin{gather*}
        \overset{x_1}{E_1} + \overset{x_2}{A}
        \autorightleftharpoons{$k_1$}{$k_2$} \overset{x_4}{AE_1}
        \autorightarrow{$k_3$}{} \overset{x_1}{E_1} + \overset{x_5}{A_P}
        \\ 
        \overset{x_3}{E_2} +
        \overset{x_5}{A_P}\autorightleftharpoons{$l_1$}{$l_2$}
        \overset{x_6}{A_PE_2} \autorightarrow{$l_3$}{}
        \overset{x_3}{E_2} + \overset{x_2}{A} 
      \end{gather*}
      \vspace*{2ex}
    \end{minipage}
  }
  \caption{Phosphorylation of protein $A$ at a single phosphorylation
    site: process diagram (A) and mass action network (B).}
  \label{fig:n=1_exa}
\end{figure}

\subsection{Dynamical systems defined by a single phosphorylation network}
\label{sec:dyn_sys_1s}

The network depicted in Fig.~\ref{fig:n=1_exa} describes the
phosphorylation of a protein $A$ at a single site by the kinase $E_1$
and the dephosphorylation of mono-phosphorylated protein $A_P$ by the
phosphatase $E_2$ (cf.\
Fig.~\ref{fig:1-site}). Figure~\ref{eq:net_phospho_1} contains a mass
action network describing this process, where chemical reactions are
denoted by arrows pointing from the species that are consumed in a
reaction (the educts) to the species that are produced (the products)
and $k_i$ ($l_i$) denote rate constants.  The network consists of 6
species, the kinase $E_1$ (with concentration variable $x_1$), the
protein $A$ ($x_2$), the phosphatase $E_2$ ($x_3$), the enzyme
substrate complexes $AE_1$ ($x_4$) and $A_PE_2$ ($x_6$) and the
mono-phosphorylated $A_P$ ($x_5$). The network is composed of 6
reactions, two describing the reversible binding of protein and kinase
$E_1 + A {\rightleftharpoons} AE_1$, one the irreversible
release of mono-phosphorylated protein and kinase $AE_1 
{\to} E_1 + A_P$, two the reversible binding of
mono-phosphorylated protein and phosphatase $E_2 + A_P
{\rightleftharpoons} A_PE_2$ and one the irreversible release 
unphosphorylated protein and phosphatase $A_PE_2{\to}
E_2 + A$.

To every reaction we associate a reaction vector describing the
stoichiometry of that reaction. For example in the reaction
\begin{displaymath}
  E_1 + A \autorightarrow{$k_1$}{} AE_1
\end{displaymath}
one unit of $A$ ($x_2$) and $E_1$ ($x_1$) are consumed and one unit of
$A E_1$ ($x_4$) is produced, which yields the vector
\begin{displaymath}
  \left(-1,\, -1,\, 0,\, 1,\, 0,\, 0\right)^T\ .
\end{displaymath}
All in all one obtains six reaction vectors, that we collect as
column vectors of the stoichiometric matrix
\begin{displaymath}
  \Si{1}= \left[
    \begin{array}{cccccc}
      -1 & \phantom{-}1   & \phantom{-}1   & \phantom{-}0   &
      \phantom{-}0   & \phantom{-}0   \\
      -1 & \phantom{-}1   & \phantom{-}0   & \phantom{-}0   &
      \phantom{-}0   & \phantom{-}1   \\ 
      \phantom{-}0 & \phantom{-}0   & \phantom{-}0   & -1  &
      \phantom{-}1   & \phantom{-}1   \\       
      \phantom{-}1 & -1  & -1 & \phantom{-}0   & \phantom{-}0   &
      \phantom{-}0   \\  
      \phantom{-}0 & \phantom{-}0   & \phantom{-}1   & -1  &
      \phantom{-}1   & \phantom{-}0   \\ 
      \phantom{-}0 & \phantom{-}0   & \phantom{-}0   & \phantom{-}1
      &-1   & -1
    \end{array}
  \right]\ ,
\end{displaymath}
where the superscript $^{\brac{1}}$ is used to denote the
stoichiometric matrix of a 1-site phosphorylation network. Columns 1
and 2 of \Si{1} correspond to the reversible binding of $A$ and $E_1$,
column 3 to the irreversible release of $A_P$ and $E_1$, columns 4 and
5 to the reversible binding of $A_P$ and $E_2$ and column 6 to the
irreversible release of $A$ and $E_2$.

Using mass action kinetics the reaction rate $r_i$ (i.e.\ the speed of
reaction $i$) is proportional to the product of the concentrations of
the educts, for example, the rate $r_1=k_1\, x_1\, x_2$ for the
reaction $E_1+A \to AE_1 $. Similarly one obtains $r_2=k_2\, x_4$ for
$AE_1\to  E_1 + A$, $r_3=k_3\, x_4$ for $AE_1\to  E_1 + A_P$,
$r_4=l_1\, x_3\, x_5$ for $E_2 + A_P\to A_PE_2$, $r_5=l_2\, x_6$ for
$A_PE_2 \to E_2 + A_P$ and $r_6=l_3\, x_6$ for $A_PE_2 \to E_2 +
A$. 
With $x=(x_1$,\ldots, $x_6)^T$, $\kappa=(k_1$, \ldots, $k_3$, $l_1$, \ldots,
$l_3)^T$
we collect $r_1$, \ldots, $r_6$ in the vector 
\begin{subequations}\label{eq:14s_ode}
\begin{equation}\label{eq:1rs_ode}
  \ri{1}{\kappa}{x} = \left(k_1\, x_1\, x_2,\, k_2\, x_4,\, k_3\,
    x_4,\, l_1\, x_3\, x_5,\, l_2\, x_6,\, l_3\, x_6\right)^T
\end{equation}
and obtain the dynamical system 

\begin{align}
  \label{eq:1s_ode}
  \dot x &= \Si{1}\, \ri{1}{\kappa}{x} \, ,\\
  \intertext{which can be written componentwise as}
  \notag
  \dot{x}_1 & = \left(k_2+k_3\right)x_4 - k_1x_1x_2,\\
  \notag
  \dot{x}_2 & = -k_1x_1x_2 + k_2x_4 + l_3x_6, \\
  \notag
  \dot{x}_3 & = \left(l_2+l_3\right)x_6 - l_1x_3x_5,  \\
  \label{eq:1xis_ode}
  \dot{x}_4 & = - \left(k_2+k_3\right)x_4 + k_1x_1x_2, \\
  \notag
  \dot{x}_5 & = k_3x_4 - l_1x_3x_5 + l_2x_6, \\
  \notag
  \dot{x}_6 & = - \left(l_2+l_3\right)x_6 + l_1x_3x_5.
\end{align}
Note that the vector $\ri{1}{\kappa}{x}$ is fully characterized by the
vector $\kappa$ and the exponents of the monomials. We collect those
in the rate exponent matrix \Yi{1} 
corresponding to the educt complexes (i.e., the \lq left-hand
sides\rq\,  of the reactions): 
\begin{equation}\label{eq:1ys_ode}
  \Yi{1} = \left[
    \begin{array}{cccccc}
      1 & 0 & 0 & 0 & 0 & 0 \\
      1 & 0 & 0 & 0 & 0 & 0 \\
      0 & 0 & 0 & 1 & 0 & 0 \\
      0 & 1 & 1 & 0 & 0 & 0 \\
      0 & 0 & 0 & 1 & 0 & 0 \\
      0 & 0 & 0 & 0 & 1 & 1 
    \end{array}
  \right].
\end{equation}
\end{subequations}
Fig.~\ref{fig:n=1_exa}
does not consider protein synthesis and
degradation, the total amount of protein $A$ is therefore constant,
that is, one has the conservation relation 
\begin{displaymath}
  x_2+x_4+x_5+x_6 = c_1,
\end{displaymath}
where $c_1$ is a constant denoting the total concentration of
$A$. The name conservation relation stems form the fact that the above
sum of concentrations is constant along solutions of
(\ref{eq:1s_ode}):
from (\ref{eq:1xis_ode})
it is easy to see that
$\dot x_ 2\brac{t} + \dot x_4\brac{t} + \dot x_5\brac{t} + \dot
x_6\brac{t} = 0$ and hence $x_ 2\brac{t} +  x_4\brac{t} + x_5\brac{t}
+ x_6\brac{t} = const$. Likewise one has
\begin{displaymath}
  x_1+x_4 = c_2\qquad \text{and}\qquad x_3+x_6 = c_3,
\end{displaymath}
with constants $c_2$ and $c_3$ denoting the total concentration of
$E_1$ and $E_2$, respectively. 
Conservation relations are defined by elements of the left
kernel of the matrix \Si{1}. One obtains, for example, the following
full row-rank matrix 
\begin{displaymath}
  \Zi{1} =\left[
    \begin{array}{cccccc}
      1 & 0 & 0 & 1 & 0 & 0 \\
      0 & 0 & 1 & 0 & 0 & 1 \\
      0 & 1 & 0 & 1 & 1 & 1
    \end{array}
  \right]
\end{displaymath}
with ${\Zi{1}}\, \Si{1} \equiv 0$ where the rows form a basis of the left kernel of $\Si{1}$.

\subsection{The mass action network derived from Figure~\ref{fig-1}}
\label{sec:network}

The mass action network derived from Fig.~\ref{fig-1} (with $n$ an
arbitrary but fixed positive number) has been described in
\cite{ToricRN,multi-001}. Here we use a similar mathematical
description:
the network consists of the following chemical species: the protein
(substrate) $A$ together with $n$ phosphoforms $A_P$, \ldots,
$A_{nP}$; the kinase $E_1$ together with $n$ kinase-substrate
complexes $A\, E_1$, \ldots, $A_{n-1P}\, E_1$ and the phosphatase
$E_2$ together with $n$ phosphatase-substrate complexes $A_{P}\, E_2$,
\ldots, $A_{nP}\, E_2$. Hence there is a total of $3n+3$ species. To
each species, a variable $x_i$ denoting its concentration is
assigned as 
depicted in
Table~\ref{tab:VarAssignment}. We collect all variables in a $(3n+3)$-dimensional vector
\begin{displaymath}
  x := \brac{x_1,\, \ldots,\, x_{3n+3}}^T\ .
\end{displaymath}
As it will turn out, the chosen labeling entails a simple block structure for
the matrices associated to the  dynamical system \eqref{eq:def_rnkx} of the network in Fig.~\ref{fig-1},
cf., for example, the block structure \eqref{eq:Ei} for the generators
of the nonnegative cone in the kernel of the stoichiometric matrix.

\begin{table}
  \centering
  \begin{tabular}{|c|c|c|} \hline
    Phos. \# & Species & Var. \\ \hline
    \multirow{3}{*}{0} & $E_1$ & $x_1$ \\
    & $A$ & $x_2$ \\
    & $E_2$ & $x_3$ \\ \hline
    \multirow{3}{*}{1} & $A\, E_1$ & $x_4$ \\
    & $A_P$ & $x_5$ \\
    & $A_P\, E_2$ & $x_6$ \\ \hline
    & \vdots & \\ \hline
    \multirow{3}{*}{i} & $A_{i-1P}\, E_1$ & $x_{1+3i}$ \\
    & $A_{iP}$ & $x_{2+3i}$ \\
    & $A_{iP}\, E_2$ & $x_{3+3i}$ \\ \hline
    & \vdots & \\ \hline
    \multirow{3}{*}{n} & $A_{n-1P}\, E_1$ & $x_{1+3n}$ \\
    & $A_{nP}$ & $x_{2+3n}$ \\
    & $A_{nP}\, E_2$ & $x_{3+3n}$ \\ \hline
  \end{tabular}
  \caption{Assignment of variables to species}
  \label{tab:VarAssignment}
\end{table}

Assuming a distributive mechanism, a single phosphorylation occurs
with each encounter of substrate and kinase \cite{ptm-001} and $n$
phosphorylations therefore require $n$ encounters of substrate and
kinase of the form
\begin{displaymath}
  E_1 + A_{i-1P} \autorightleftharpoons{}{} A_{i-1 P}\, E_1
  \autorightarrow{}{} E_1 + A_{iP},\quad i=1, \ldots, n\ ,
\end{displaymath}
where $A_{0P} = A$ and $A_{1P} = A_P$. Similarly, $n$
dephosphorylations following a distributive mechanism require $n$
encounters of substrate and phosphatase of the form:
\begin{displaymath}
  E_2 + A_{iP} \autorightleftharpoons{}{} A_{i P}\, E_2
  \autorightarrow{}{} E_2 + A_{i-1P},\quad i=1, \ldots, n\ .
\end{displaymath}
Each phosphorylation and each dephosphorylation therefore consists of
3 reactions and consequently the network consists of $6n$
reactions. To each reaction we associate a rate constant. We use $k_i$
for phosphorylation  and $l_i$ for dephosphorylation reactions and 
obtain the following reaction network:
\begin{equation}
  \label{eq:network_dd}
  \begin{split}
    E_1 + A_{i-1P} \autorightleftharpoons{$k_{3i-2}$}{$k_{3i-1}$}
    A_{i-1 P}\, E_1 \autorightarrow{$k_{3i}$}{} E_1 + A_{iP},
    i=1,\, \ldots,\, n \\ 
    E_2 + A_{iP} \autorightleftharpoons{$l_{3i-2}$}{$l_{3i-1}$} A_{i
      P}\, E_2 \autorightarrow{$l_{3i}$}{} E_2 + A_{i-1P}, i=1,\,
    \ldots,\, n\ .
  \end{split}
\end{equation}
Using this notation, $k_{3i-2}$ ($l_{3i-2}$) denotes the association
constant, $k_{3i-1}$ ($l_{3i-1}$) the dissociation constant and
$k_{3i}$ ($l_{3i}$) the catalytic constant of the $i$-th
phosphorylation (dephosphorylation) step. We collect all rate
constants in a vector $\kappa$ 
defined in the following way:
\begin{definition}[The vector of rate constants $\kappa$]\label{def:rate_constants} \mbox{} \\
  We collect the six rate
  constants associated to the $i$-th phosphorylation/dephos\-phorylation in  network~(\ref{eq:network_dd})
  in a sub-vector  
  \begin{subequations}
    \begin{align}
      \label{eq:def_kappa_i}
      \kappa_{\brac{i}} &:= \brac{k_{3i-2},\, k_{3i-1},\,
        k_{3i},\,l_{3i-2},\, l_{3i-1},\, l_{3i}}^T\ . \\
      \intertext{We then combine the sub-vectors to a $6n$-dimensional column vector}
      \label{eq:def_kappa}
      \kappa &:= \col\brac{\kappa_{\brac{1}},\, \ldots,\,
        \kappa_{\brac{n}}}\ .
    \end{align}
  \end{subequations}
\end{definition}
We conclude this subsection with a brief comment on the notation used to
denote (sub-)vectors and matrices throughout this contribution.
\begin{remark}[Vector and matrix notation]\label{rem:notation}\mbox{}\rm \\
In the following sections the dimension of vectors is determined by
  the number of phosphorylation sites $n$ according to the formula
  $3n+3$. (The vector $x$, for example, lives in Euclidian
  $\R^{3n+3}$, cf. Table~\ref{tab:VarAssignment}). We will use the
  symbol $e_j$ to denote elements of the standard basis of Euclidian
  vector spaces and use the superscript $^{\brac{i}}$ to distinguish
  basis vectors of vector spaces of different dimension $3i+3$:
  \begin{align}
    \label{eq:def_en}
    \ei{i}_j &\text{\ldots denotes elements of the standard basis of
      $\R^{3i+3}$.} 
  \end{align}
  Likewise for matrices the superscript $^{\brac{i}}$ is used to indicate that
  the number of rows and/or columns depends on an integer $i$
  (see, for example, equations (\ref{eq:def_Zi}), (\ref{eq:S_direct}),
  (\ref{eq:Y_direct}) or (\ref{eq:Ei}) below).
  
  Later on we will split vectors of length $3n+3$ into
  consecutive sub-vectors of length three and we will use the
  subscript $_{\brac{i}}$ to denote the $i$-th sub-vector
    \begin{equation}      \notag 
      x_{\brac{i}} := \brac{x_{3i+1},\, x_{3i+2},\, x_{3i+3}}^T
      \end{equation}
 and we will use 
      $x = \col\brac{x_{\brac{0}},\, \ldots,\, x_{\brac{n}}}$
 to denote that $x$ consists of $n+1$ such
        sub-vectors. Likewise we will split vectors of length $6n$
        into sub-vectors
      \begin{equation}      \notag
      y_{\brac{i}} := \brac{y_{6i-5},\, \ldots,\, y_{6i}}^T 
      \end{equation}
of length six and use
      $y= \col\brac{y_{\brac{1}},\, \ldots,\, y_{\brac{n}}}$
 to denote that $y$ consists of $n$ such sub-vectors.\hbm
\end{remark}

\subsection{The dynamical system defined by the mass action
  network~(\ref{eq:network_dd})}
\label{sec:dyn_sys}

Starting with the stoichiometric matrix \Si{1} and the rate exponent
matrix \Yi{1} defined by the mass action network depicted in
Fig.~\ref{eq:net_phospho_1} one can recursively construct matrices
\Si{n} and \Yi{n} for the network (\ref{eq:network_dd}). 
Using the ordering of species and reactions introduced above in Table~\ref{tab:VarAssignment}
one obtains $\Si{n} \in \R^{\brac{3n+3} \times 6n}$, $\Yi{n} \in
\R^{\brac{3n+3} \times 6n}$ and $\Zi{n}$ by the following steps:

\vspace*{1mm}(I) Concerning $\Si{n}$:  \begin{subequations}
    \begin{align}
      \label{eq:def_S1}
      \Si{1} &= \left[
        \begin{array}{c}
          n_{11}\\ 
          n_{21}
        \end{array}
      \right],\quad 
      \Si{2} = \left[
        \begin{array}{c|c} 
          \Si{1} & 
          \begin{array}{c}
            n_{12}\\ 
            n_{22}
          \end{array} \\
          \hline
          {\textbf{0}}_{3\times 6} & n_{21}
        \end{array}
      \right],\quad \ldots \\[2mm] 
      \Si{j} &= \left[
        \begin{array}{c|c}
          \Si{j-1} &
          \begin{array}{c}
            n_{12}\\
            {\textbf{0}}_{3{(j-2)}\times 6}\\
            n_{22} 
          \end{array}\\
          \hline
          {\textbf{0}}_{3\times 6(j-1)} & n_{21}
        \end{array}\right]
      \label{eq:def_N_n}
    \end{align}
  \end{subequations}
  for $j=3,...,n$ with the following sub-matrices of dimension $3\times 6$:
  \begin{subequations}\label{eq:nij}
    \begin{align}
      n_{11} &= \left[
        \begin{array}{rrrrrr}
          -1& \phantom{-}1 & \phantom{-}1 & \phantom{-}0 & \phantom{-}0 &
          \phantom{-}0 \\ 
          -1& 1 & 0 & 0 & 0 & 1 \\
          0 & 0 & 0 & -1& 1 & 1 
        \end{array}
      \right],\\
      n_{12} &= \left[
        \begin{array}{rrrrrr}
          -1 & \phantom{-}1 & \phantom{-}1 & \phantom{-}0 & \phantom{-}0 &
          \phantom{-}0   \\ 
          0 & 0 & 0 & 0 & 0 & 0   \\
          0 & 0 & 0 & -1& 1 & 1 
        \end{array}
      \right],\\
      n_{21} &= \left[
        \begin{array}{rrrrrr} 
          \phantom{-}1 & -1& -1& \phantom{-}0 & \phantom{-}0 & \phantom{-}0 \\
          0 & 0 & 1 & -1& 1 & 0 \\
          0 & 0 & 0 & 1 & -1& -1 
        \end{array}
      \right],\\
      n_{22} &= \left[
        \begin{array}{rrrrrr}
          \phantom{-}0 & \phantom{-}0 & \phantom{-}0 & \phantom{-}0 &
          \phantom{-}0 & \phantom{-}0 \\ 
          -1& 1 & 0 & 0 & 0 & 1 \\
          0 & 0 & 0 & 0 & 0 & 0 
        \end{array}
      \right].
    \end{align}
For fixed $n$, the recursive formula (\ref{eq:def_N_n}) evaluates to \\[1mm] 
\begin{align}
      \label{eq:S_direct}
      \Si{n} &:= \left[ 
        \begin{array}{c|c|c|c|c|c}
          n_{11} & n_{12} & n_{12} & n_{12} & & n_{12} \\
          n_{21} & n_{22} & 0_{3\times 6} & \multirow{2}{*}{$0_{2\cdot
              3\times 6}$} & &
          \multirow{5}{*}{$0_{\brac{n-2}\cdot 3\times 6}$} \\
          \cline{1-1}
          0_{3\times 6} & n_{21} & n_{22} & \phantom{0_{3\times 6}} &
          \dots & \\ \cline{1-2}
          \multicolumn{2}{c|}{0_{3\times 6\cdot 2}} & n_{21} & n_{22}
          & & \\ \cline{1-3}
          \multicolumn{3}{c|}{0_{3\times 6\cdot 3}} & n_{21} & & \\ 
          \cline{1-4}
          \multicolumn{4}{c|}{\vdots} &\ddots & \\ \cline{1-5}
          \multicolumn{5}{c|}{0_{3\times 6\cdot \brac{n-1}}} & n_{21}
        \end{array}
      \right]\, \in \R^{(3n+3)\times 6n}\, . 
\end{align}
  \end{subequations}


\vspace*{2mm}(II)  Concerning $\Yi{n}$: \begin{subequations}
    \begin{align}
      \Yi{1} &= \left[
        \begin{array}{cccccc}
          \ei{1}_1+\ei{1}_2 & \ei{1}_4 & \ei{1}_4 & \ei{1}_3+\ei{1}_5
          &\ei{1}_6 & \ei{1}_6
        \end{array}
      \right]\, ,
      \label{eq:def_Y_1} \\
      \Yi{j} &= \left[
        \begin{array}{c|c}
          \begin{array}{c} 
            \Yi{j-1} \\ 
            {\bf{0}}_{3\times 6(j-1)}
          \end{array} & 
          \begin{array}{cccccc} 
            \ei{j}_1 + \ei{j}_{3j-1} & \ei{j}_{3j+1} & \ei{j}_{3j+1} &
            \ei{j}_3 + \ei{j}_{3j+2} & \ei{j}_{3j+3} & \ei{j}_{3j+3}
          \end{array}
        \end{array}
      \right]
      \label{eq:def_Y_n}
    \end{align}
for $j=2,...,n$ with $\Yi{1} \in \R^{6\times 6}$ (cf. eq.(\ref{eq:1ys_ode}))
and $\Yi{j}\in \R^{(3j+3)\times
6j}$. 
Using\begin{align}
      \Yo{i} &:=\left[
        \begin{array}{cccccc}
          \e{i}_1 + \e{i}_{3i-1} & \e{i}_{3i+1} & \e{i}_{3i+1} &
          \e{i}_3 + \e{i}_{3i+2} & \e{i}_{3i+3} & \e{i}_{3i+3}
        \end{array}
      \right] \, \in \R^{(3i+3)\times 6}\, ,\end{align}
      the recursive formula (\ref{eq:def_Y_n}) evaluates to\\[1mm]
      \begin{align}\label{eq:Y_direct}
      {\Yi{n}}^T &:= 
      \left[
        \begin{array}{c|c|c|c|c|c|c}
          \Yo{1}^T & 0_{6\cdot 1 \times3} & \multirow{2}*{$0_{6\cdot
              2 \times 3}$} & \multirow{3}*{$0_{6\cdot 3 \times 3} $}&
          \phantom{0_{6\cdot 2 \times 3}} & \phantom{0_{6\cdot 2
              \times 3}} &  \multirow{5}*{$0_{6\cdot n \times 3}$} \\
          \cline{1-2}
          \multicolumn{2}{c|}{\Yo{2}^T} & & & & & \\ \cline{1-3}
          \multicolumn{3}{c|}{\Yo{3}^T} & & & \\
          \cline{1-4} \multicolumn{5}{r|}{\ddots} &  \\ \cline{1-5}
          \multicolumn{6}{c|}{\Yo{n-1}^T} &  \\ \hline
          \multicolumn{7}{c}{\Yo{n}^T}\\
        \end{array}
      \right]\, .
    \end{align}
\end{subequations}


\vspace*{2mm}(III)  Concerning $\Zi{n}$:\\[1mm] 
    \begin{equation}
    \label{eq:def_Zi}
    \Zi{n} = 
    \begin{array}{cc}
      \left[
        \begin{array}{ccc|}
          1 & 0 & 0 \\
          0 & 0 & 1 \\
          0 & 1 & 0
        \end{array}\,
      \right.
      &
      \!\underbrace{
        \!\left.
          \begin{array}{ccc|c|ccc}
            1 & 0 & 0 & & 1 & 0 & 0 \\
            0 & 0 & 1 & \cdots & 0 & 0 & 1 \\
            1 & 1 & 1 & & 1 & 1 & 1
          \end{array}
        \right]
      }_{n-times}\, \in \R^{3\times (3n+3)}.
    \end{array}
  \end{equation}
We note that the three rows of $\Zi{n}$ form a basis for the left kernel of $\Si{n}$.

\vspace*{5mm}
\begin{definition}[Reaction rate vector defined by
  network~(\ref{eq:network_dd})]\label{def:dyn_sys_n_phos} \mbox{}\\  
  The vector of rate constants $\kappa$ from (\ref{eq:def_kappa}) and
  the columns $y_i$ of \Yi{n} from (\ref{eq:Y_direct})
  define 
  two monomial functions $ \Phi^{\brac{n}}:\R^{3n+3} \to
  \R^{6n}$ and $\ri{n}{\kappa}{x}:\R^{3n+3}\to\R^{6n}$:
  \begin{subequations}
    \begin{align}
      \label{eq:def_phinx}
      \Phi^{\brac{n}}\brac{x} &:= \brac{x^{y_1},\, \ldots,\,
        x^{y_{6n}}}^T \\ 
      \intertext{and}
      \label{eq:def_rnkx}
      \ri{n}{\kappa}{x} &:= \diag\brac{\kappa}\,
      \Phi^{\brac{n}}\brac{x} 
    \end{align}
  \end{subequations}
  for $x\in \Rnn^{3n+3}$
  (cf. (\ref{eq:def_xy})). The $6n$-dimensional vector
  $\ri{n}{\kappa}{x}$ is called the reaction rate vector.
\end{definition}
Together with the stoichiometric matrix
\Si{n} from (\ref{eq:S_direct}), the dynamical system,
defined by the network~(\ref{eq:network_dd}) of Fig.~\ref{fig-1}, is
then given by
\begin{equation}
  \label{eq:ode_def}
  \dot x = \Si{n}\, \ri{n}{\kappa}{x}
\end{equation}

\begin{remark}[The monomial function
  $\Phi^{\brac{n}}\brac{x}$]\label{rem:rkx_Phi_x} \mbox{}\rm\\ 
  Observe that the monomial
  function  $\Phi^{\brac{n}}\brac{x}$ in (\ref{eq:def_rnkx}) satisfies 
  \begin{subequations}
    \begin{align}\label{eq:YT_mu}
      \Phi^{\brac{n}}\brac{e^\mu} &= \ e^{{\mathcal{Y}^{(n)}}^T\, \mu}\, , \\
      \label{eq:1_over_phi}
      \frac{1}{\Phi^{\brac{n}}\brac{x}} &= \ \Phi^{\brac{n}}\brac{x^{-1}}
    \end{align}
  \end{subequations}
  for vectors $\mu\in\R^{3n+3}$ and $x\in\R^{3n+3}$ with $x_i\neq 0$, 
  $i=1$, \ldots, $3n+3$. \hbm 
\end{remark}

From equation~(\ref{eq:ode_def}) follows that level sets $\{x\in\R^{3n+3} |
\Zi{n}\, x = {const.}\}$ are invariant under the flow of (\ref{eq:ode_def})
as $\Zi{n}\, x(t) = \Zi{n}\, x(0)$ along solutions $x(t)$ of
(\ref{eq:ode_def}).
This observation motivates
the classical definition of multistationarity 
originating in 
chemical engineering (cf.\ Figure \ref{fig:cartoon_multi} and, for
example, 
\cite[Definition~1\&Remark~2]{fein-043} or \cite{fm-95a,fm-95b}):
\begin{definition}[Multistationarity]\label{defimulti} \mbox{}\\
  The system $\dot x = \Si{n}\, \ri{n}{\kappa}{x}$ from \eqref{eq:ode_def} is
  said to exhibit multistationarity if and only if there exist a
  positive vector $\kappa \in \Rp^{6n}$ and at least two distinct positive
  vectors $a$, $b \in \Rp^{3n+3}$ with
  \begin{subequations}
    \begin{align}
      \label{eq:multistat_ode_x0}
      \Si{n}\, \ri{n}{\kappa}{a} &= 0 \\
      \label{eq:multistat_ode_x1}
      \Si{n}\, \ri{n}{\kappa}{b} &= 0 \\
      \label{eq:multistat_con_rel_x0_x1}
      \Zi{n}\, a &= \Zi{n}\, b.
    \end{align}
  \end{subequations}
\end{definition}
\begin{figure}
  \centering
  \begin{minipage}[c]{.4\linewidth}
    \includegraphics[width=1\linewidth,angle=-90]{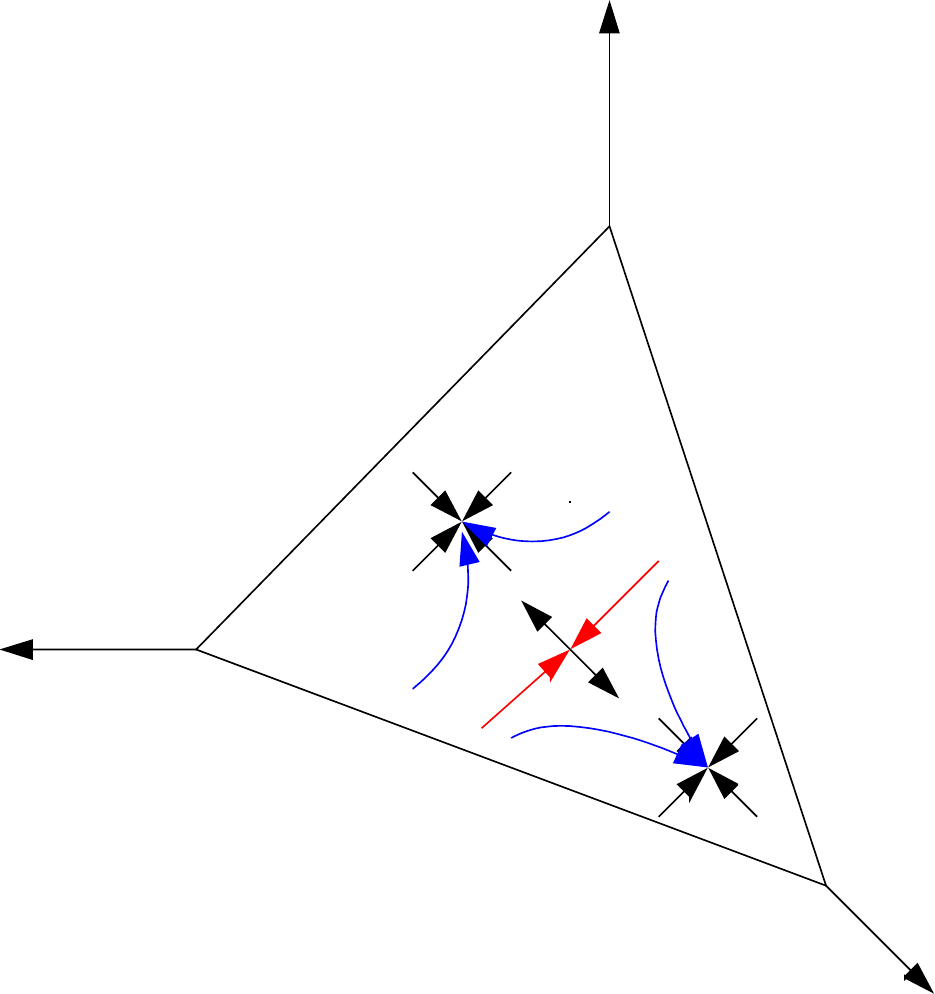}
  \end{minipage}
  \caption{Multistationarity and bistability:
    While the equation $\Si{n}\, r^{(n)}(\kappa,x)=0$ often has
    infinitely many steady state solutions, a solution $x(t)$ of the
    differential equation $\dot x = \Si{n}\, r^{(n)}(\kappa ,x)$ \lq
    sees\rq{} only those contained in the level set $L_{x(0)}:=\{x \in
    \Rnn^{3n+3} | \Zi{n}\, x = \Zi{n}\, x(0) \}$. 
    Multistationarity
    requires at least two positive steady state solutions in
    $L_{x(0)}$, bistability two \lq stable\rq{} positive steady state
    solutions in $L_{x(0)}$. 
    The triangle in the figure shows such a level
    set $L_{x(0)}$ in $\Rnn^3$ with two stable positive steady states. The stable
    manifold (in red) of the third positive steady state generates the threshold
    separating the domains of attraction of the two stable steady
    states.
    \label{fig:cartoon_multi}
  }
\end{figure}

We conclude this section by a discussion of $\ker\brac{\Si{n}}$ and 
of the pointed polyhedral cone $\ker\brac{\Si{n}} \cap \Rnn^{3n+3}$:
\begin{lem}[A nonnegative basis of $\ker\brac{\Si{n}}$]\label{lem:basis_ker_S} \mbox{} \\
  Let $n\geq 2$ be given and recall the stoichiometric matrix \Si{n}
  as defined in (\ref{eq:S_direct}). Let
  \begin{subequations}
    \begin{align}
      \label{eq:E1}
      E &= \left[
        \begin{array}{ccc}
          1&0&1\\
          1&0&0\\
          0&0&1\\ 
          0&1&1\\
          0&1&0\\
          0&0&1
        \end{array}
      \right] \\
      \intertext{and define the $6n\times 3n$ matrix with $E$ on the
        block diagonal (and zero otherwise):}
      \label{eq:Ei}
      \Ei{n} &:= \left[
        \begin{array}{ccc}
          E & & \\
          & \ddots & \\
          & & E
        \end{array}
      \right]
    \end{align}
    Then columns of \Ei{n} form a basis of $\ker\brac{\Si{n}}$. In addition,
    the columns of \Ei{n} are generators of $\ker\brac{\Si{n}} \cap
    \Rnn^{3n+3}$.
  \end{subequations}
\end{lem}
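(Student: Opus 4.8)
The plan is to establish the statement in three steps: (a) verify $\Si{n}\,\Ei{n}=0$, so that every column of $\Ei{n}$ lies in $\ker\brac{\Si{n}}$; (b) show that the $3n$ columns of $\Ei{n}$ are linearly independent and that $\dim\ker\brac{\Si{n}}=3n$, whence they form a basis; (c) show that $\lambda\mapsto\Ei{n}\lambda$ carries $\Rnn^{3n}$ bijectively onto $\ker\brac{\Si{n}}\cap\Rnn^{6n}$, so that the columns of $\Ei{n}$ are exactly the generating (extreme) rays of this pointed cone.

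For step (a) I would read off the block pattern of \eqref{eq:S_direct}: in the $j$-th block of six columns of $\Si{n}$ the only nonzero $3\times6$ sub-blocks are one in row block $0$ (equal to $n_{11}$ if $j=1$ and to $n_{12}$ if $j\ge2$), one in row block $j$ (equal to $n_{21}$), and, when $j\ge2$, one in row block $j-1$ (equal to $n_{22}$); here $n_{11},n_{12},n_{21},n_{22}$ are the matrices of \eqref{eq:nij}. Since $\Ei{n}$ is block diagonal with the single $6\times3$ block $E$ of \eqref{eq:E1} (cf.\ \eqref{eq:Ei}), every $3\times3$ block of $\Si{n}\,\Ei{n}$ is one of $n_{11}E$, $n_{12}E$, $n_{21}E$, $n_{22}E$. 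A direct multiplication shows that each of these four is the zero matrix, so $\Si{n}\,\Ei{n}=0$. This is the only step requiring explicit arithmetic, and I expect it --- rather than anything conceptual --- to be the place where most care is needed: one has to carry out the four small matrix products correctly and correctly identify which column block of $\Si{n}$ meets which row block.

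For step (b), note that rows $2$, $5$ and $3$ of $E$ are, in that order, the rows of the $3\times3$ identity matrix, so $E$ has full column rank $3$; as $\Ei{n}$ is block diagonal with copies of $E$, it has rank $3n$ and its columns are linearly independent. To pin down the dimension of the kernel I would use the fact noted right after \eqref{eq:def_Zi}, that the three rows of $\Zi{n}$ form a basis of the left kernel of $\Si{n}$: the left kernel then has dimension $3$, hence $\operatorname{rank}\Si{n}=(3n+3)-3=3n$, and the rank--nullity relation gives $\dim\ker\brac{\Si{n}}=6n-3n=3n$. (A self-contained alternative would be induction on $n$ along the recursion \eqref{eq:def_N_n}, each step adding three rows and six columns while raising the rank by three.) Combined with step (a), the $3n$ linearly independent columns of $\Ei{n}$ span the $3n$-dimensional space $\ker\brac{\Si{n}}$, hence form a basis of it.

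For step (c), by step (b) every $v\in\ker\brac{\Si{n}}$ can be written uniquely as $v=\Ei{n}\lambda$ with $\lambda\in\R^{3n}$. Since $\Ei{n}$ is entrywise nonnegative, $\lambda\ge0$ implies $v\ge0$. Conversely, split $v$ into consecutive blocks $v_{\brac{i}}\in\R^6$ and $\lambda$ into consecutive blocks $\lambda_{\brac{i}}\in\R^3$, so that $v_{\brac{i}}=E\,\lambda_{\brac{i}}$; the components of $v_{\brac{i}}$ in positions $2$, $5$ and $3$ are precisely the three entries of $\lambda_{\brac{i}}$, so $v\ge0$ forces $\lambda\ge0$. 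Hence $\lambda\mapsto\Ei{n}\lambda$ restricts to a bijection of $\Rnn^{3n}$ onto $\ker\brac{\Si{n}}\cap\Rnn^{6n}$; in particular the latter cone is simplicial (so pointed), and its extreme rays --- equivalently, its generators --- are exactly the columns of $\Ei{n}$.
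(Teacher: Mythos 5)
Your proof is correct, and two of its three steps take a genuinely different route from the paper's. The verification $\Si{n}\,\Ei{n}=0$ via the four products $n_{11}E$, $n_{12}E$, $n_{21}E$, $n_{22}E$ is exactly the paper's first step. For the spanning property, however, the paper proves the reverse inclusion $\ker\brac{\Si{n}}\subseteq\im\brac{\Ei{n}}$ directly: it writes $\Si{n}\eta=0$ blockwise, reads off $n_{21}\eta_{(n)}=0$ from the last block row, and back-substitutes to conclude $\eta_{(i)}\in\ker\brac{n_{21}}=\im\brac{E}$ for every $i$; this is self-contained and yields $\operatorname{rank}\Si{n}=3n$ as a by-product rather than an input. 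Your dimension count instead leans on the paper's unproved side remark that the rows of $\Zi{n}$ form a \emph{basis} (not merely a linearly independent subset) of the left kernel --- be aware that the easily checked identity $\Zi{n}\,\Si{n}=0$ by itself only gives $\operatorname{rank}\Si{n}\le 3n$, i.e.\ $\dim\ker\brac{\Si{n}}\ge 3n$, which is the wrong direction for your argument; you do flag the inductive alternative, which would make the step self-contained. For the cone statement the approaches also differ: the paper checks the support-minimality characterization of generators column by column, whereas you observe that rows $2,5,3$ of $E$ form the $3\times3$ identity, so that $\lambda\mapsto\Ei{n}\lambda$ is a bijection between $\Rnn^{3n}$ and the cone, which is therefore simplicial with the columns as its extreme rays (and you correctly read the ambient space as $\R^{6n}$, silently fixing the paper's $\Rnn^{3n+3}$ typo). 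Your version of this last step is arguably cleaner and makes pointedness explicit; the paper's stays closer to the combinatorial definition of generators it imports from the literature.
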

\begin{proof}
  First recall the building blocks of \Si{n}, the $3 \times 6$
  sub-matrices $n_{11}$, $n_{21}$, $n_{12}$ and $n_{22}$ (cf.~(\ref{eq:nij})). We show that
  $\im\brac{\Ei{n}} = \ker\brac{\Si{n}}$ by proving the two inclusions
  $\im\brac{\Ei{n}} \subseteq \ker\brac{\Si{n}}$ 
  and $\ker\brac{\Si{n}} \subseteq \im\brac{\Ei{n}}$.
  
   \vspace{1mm}(i)\ Concerning $\im\brac{\Ei{n}} \subseteq
      \ker\brac{\Si{n}}$:\\
    Straightforward computations show that $n_{11}\, E = 0_{3\times
      3}$, $n_{12}\, E = 0_{3\times 3}$, $n_{21}\, E = 0_{3\times 3}$
    and $n_{22}\, E = 0_{3\times 3}$. Thus $\Si{n}\, \Ei{n} =
    0_{(3n+3)\times 3n}$.

  \vspace{1mm}(ii)\ Concerning $\ker\brac{\Si{n}} \subseteq
  \im\brac{\Ei{n}}$: \\
  We first note
  that the columns of $E$ form a basis of $\ker\brac{n_{11}}$ and
  $\ker\brac{n_{21}}$.
  Pick any vector $\eta\in\ker\brac{\Si{n}}$
  and split it in consecutive sub-vectors of length 6 as described
  in Remark~\ref{rem:notation}: 
  $\eta=\col\brac{\eta_{\brac{1}},\,
    \ldots,\,\eta_{\brac{n}}}$.  The vector $\eta$ satisfies
  $\Si{n}\, \eta = 0$, that is (cf. eq.(\ref{eq:S_direct}) defining
  \Si{n})
  \begin{align*}
    n_{11}\eta_{(1)} + \sum\limits_{i=2}^n n_{12}\eta_{(i)} & = 0,
    \\ 
    n_{21}\eta_{(1)} + n_{22}\eta_{(2)} & = 0, \\
    {} & \vdots \\
    n_{21}\eta_{(n-1)} + n_{22}\eta_{(n)} & = 0,\\
    n_{21}\eta_{(n)} & = 0.
  \end{align*}
  From the last equation follows that $\eta_{(n)} \in \ker(n_{21}) =
  \im\brac{E}$.  The preceding equations then imply
  $\eta_{(i)}\in \ker(n_{21}) = \im\brac{E}$, for $i=n-1$, \ldots,
  $1$. Hence $\eta\in\im\brac{\Ei{n}}$ and thus $\ker\brac{\Si{n}}
  \subseteq \im\brac{\Ei{n}}$.
  
  \vspace{1mm} Finally, concerning the pointed polyhedral cone, observe that a vector
  $\eta$ is called a generator of the cone
  $\ker\brac{\Si{n}}\cap\Rnn^{3n+3}$, if and only if $\eta$ satisfies
  the following three conditions (cf.\ for example, \cite{kst-04}): 
  \begin{eqnarray*}
    (i)\quad \eta  \geq 0\, , \qquad  (ii) \quad    \Si{n}\, \eta  =
    0,\phantom{,,,,,,,,,,,,,,}\\
    (iii)\ \ \mbox{for  generators $\eta_1$ and $\eta_2$ of
      $\ker\brac{\Si{n}} \cap \Rnn^{3n+3}$ one has}\phantom{....}\\
    \supp\brac{\eta_1} \subseteq \supp\brac{\eta_2}
    \Rightarrow \eta_1 = 0\; \text{or $\eta_1 =
      \alpha\eta_2$ with $\alpha >0$,\phantom{..}}
  \end{eqnarray*}
  where  the symbol
  $\supp\brac{\eta}$ denotes the  support of a vector $\eta\in\R^{3n+3}$
  (i.e.\ the set of indices
  of the nonzero elements). Observe that the column vectors of $E$
  satisfy these conditions and recall the block diagonal structure of
  \Ei{n}. Hence the columns of \Ei{n} satisfy these conditions as
  well.
\end{proof}

\vspace*{5mm}\section{Results}
\label{sec:results}

In this section we first introduce a condition for multistationarity
(cf.\ Definition~\ref{defimulti})  in terms of sign patterns, followed
by a discussion of sign patterns satisfying this condition.
The rather technical proofs
of Theorem~\ref{theo:poly_sol} and Theorem~\ref{theo:feasible_patterns} 
will be presented in  Section~\ref{sec:proof_poly_sol} and
Section~\ref{sec:proof_theo_patterns} respectively.

\subsection{A sign condition for multistationarity in the dynamical
  system~(\ref{eq:ode_def})}
\label{sec:sign_condi_multi}

We now turn to multistationarity for dynamical systems defined by
network (\ref{eq:network_dd}), that is to the system defined in
(\ref{eq:ode_def}). It follows from Definition~\ref{defimulti} that
multistationarity requires, for a given vector $\kappa$, the existence
of two positive solutions $a$ and $b$ to the polynomial equations
(\ref{eq:multistat_ode_x0}) and (\ref{eq:multistat_ode_x1}) that
satisfy the linear condition (\ref{eq:multistat_con_rel_x0_x1}). In
Theorem~\ref{theo:poly_sol} below, we first turn to the polynomial
equations. The linear constraint is taken into account afterwards in
Corollary~\ref{coro:sign_condi}.  In the following
Theorem~\ref{theo:poly_sol} we use matrices \PPi{n} and \Mi{n} to
argue that, for system~(\ref{eq:ode_def}), the existence of {\em
  positive solutions} to the polynomial equations is equivalent to the
existence of {\em solutions of a linear system}.
\begin{definition}[Matrices \PPi{n} and \Mi{n}]\label{def:Pi_and_Mi}\mbox{} \\
  Let $\uu{1}$ denote a vector of dimension 6 filled with
  the number 1. Define the $6\times 2$ matrix
  \begin{subequations}
    \begin{align}
      \label{eq:def_Pi_0}
      \Pi_0\brac{i} :&= \left[
        \begin{array}{cc}
          \brac{2-i}\, \uu{1} & \brac{i-1}\, \uu{1}
        \end{array}
      \right],\; i=1,\, \ldots,\, n \\
      \intertext{and the $6\, n\times 2$ matrix}
      \label{eq:def_Pi}
      \PPi{n} &:= \left[
        \begin{array}{c}
          \Pi_0\brac{1} \\ \vdots\\ \Pi_0\brac{n}
        \end{array}
      \right]\ .
    \end{align}
  \end{subequations}
  Further define the $3\times 3$ matrices 
  \begin{subequations}
    \begin{align}
      \label{eq:def_M0}
      M\brac{0,n} &:=
      \begin{bmatrix}
        -1 & -n+1 & \phantom{-}n\\
        \phantom{-}1 & \phantom{-}n & -n\\
        -1 & -n+2 & \phantom{-}n-1
      \end{bmatrix} \\
      \intertext{and}
      \label{eq:def_Mi}
      M\brac{i,n} &:=
      \begin{bmatrix}
        0 & -i+2 & \phantom{-}i-1 \\
        1 & \phantom{-}n-i & -n+i\\
        0 & -i+2 & \phantom{-}i-1
      \end{bmatrix} \\
      \intertext{for $i=1,...,n$  and the $\brac{3\, n+3}\times 3$ matrix}
      \label{eq:def_Mn}
      \Mi{n} &:= \left[
        \begin{array}{c}
          M\brac{0,n} \\
          M\brac{1,n} \\
          \vdots \\
          M\brac{n,n}
        \end{array}
      \right]\, .
    \end{align}
  \end{subequations}
\end{definition}
In Section~\ref{sec:proof_poly_sol} below we prove the following result.
\begin{theorem}[Solutions to the polynomial equations]\label{theo:poly_sol}\mbox{} \\
  Consider the $n$-site sequential distributed phosphorylation
  network~(\ref{eq:network_dd}) and recall the corresponding matrices
  \Si{n}, \Yi{n},  \Ei{n},  \PPi{n} and \Mi{n} 
  (cf. eqns.~(\ref{eq:S_direct}), (\ref{eq:Y_direct}), (\ref{eq:Ei}),
  (\ref{eq:def_Pi})  and (\ref{eq:def_Mn})). The following are
  equivalent:
  \begin{enumerate}[{(}A{)}]
  \item There exists a vector $\kappa\in\Rp^{6n}$ and vectors $a$,
    $b\in\Rp^{3n+3}$ with $a\neq b$ satisfying
    (\ref{eq:multistat_ode_x0}) \& (\ref{eq:multistat_ode_x1}), that
    is
    \begin{displaymath}
      \Si{n}\, \ri{n}{\kappa}{a} = 0,\; \Si{n}\, \ri{n}{\kappa}{b} = 0.
    \end{displaymath}
  \item There exist  vectors $\mu\in\R^{3n+3}$, $\mu\neq 0$, and 
     $\brac{\nu,\lambda}\in\Rp^{3n}\times\Rp^{3n}$,
      such that
    \begin{equation}
      \label{eq:trans_eq}
      {\Yi{n}}^T\, \mu = \ln\frac{\Ei{n}\, \nu}{\Ei{n}\, \lambda}\ .
    \end{equation}
  \item There exist  vectors $\mu\in\R^{3n+3}$, $\xi\in\R^2$ 
  and $\nu$, $\lambda \in \Rp^{6n}\times \Rp^{6n}$, with  $\mu\neq 0$
  such that
    \begin{subequations}
      \begin{enumerate}
      \item the vectors $\mu$ and $\xi$ satisfy
        \begin{equation}
          \label{eq:mu_lin}
          {\Yi{n}}^T\, \mu = \PPi{n}\, \xi,
        \end{equation}
      \item and the vectors $\nu$, $\lambda$ and $\xi$ satisfy
        \begin{align}
          \label{eq:lambda_free}
          \lambda &\in\Rp^{6n},\quad \text{free,} \\
          \intertext{and, for $i=1$, $\ldots$, $n$}
          \label{eq:def_nu_1}
          \nu_{3i} &= \lambda_{3i}\,  e^{\brac{2-i}\, \xi_1+\brac{i-1}\,
            \xi_2},  \\ 
          \nu_{3i-2} &= \lambda_{3i-2}\, \frac{\nu_{3i}}{\lambda_{3i}},
          \\ 
          \label{eq:def_nu_3}
          \nu_{3i-1} &= \lambda_{3i-1}\, \frac{\nu_{3i}}{\lambda_{3i}}.
        \end{align}
      \end{enumerate}
    \end{subequations}
  \item There exists a vector $\mu\in\R^{3n+3}$, $\mu\neq 0$ with
    \begin{equation}
      \label{eq:mu_im_M}
      \mu\in\im\brac{\Mi{n}}\, .
    \end{equation}
  \end{enumerate}
\end{theorem}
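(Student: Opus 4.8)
The plan is to prove the chain (A)$\Leftrightarrow$(B)$\Leftrightarrow$(C)$\Leftrightarrow$(D), each equivalence being a change of variables enabled by the block structure of the matrices involved. For (A)$\Leftrightarrow$(B) I would combine Lemma~\ref{lem:basis_ker_S} with the identities of Remark~\ref{rem:rkx_Phi_x}: if $\kappa\in\Rp^{6n}$ and $a\in\Rp^{3n+3}$ satisfy $\Si{n}\,\ri{n}{\kappa}{a}=0$, then $\ri{n}{\kappa}{a}$ is a strictly positive vector in $\ker\brac{\Si{n}}$, hence $\ri{n}{\kappa}{a}=\Ei{n}\,\nu$ with $\nu$ uniquely determined (the columns of \Ei{n} form a basis) and $\nu>0$, since for the $6\times 3$ block $E$ one has $Ev>0\Leftrightarrow v>0$. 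Using $\Phi^{\brac{n}}\brac{e^{\mu}}=e^{{\Yi{n}}^T\mu}$ this reads $\diag\brac{\kappa}\,e^{{\Yi{n}}^T\ln a}=\Ei{n}\,\nu$; doing the same for $b$ with multiplier $\lambda>0$ and dividing componentwise gives ${\Yi{n}}^T\brac{\ln a-\ln b}=\ln\frac{\Ei{n}\nu}{\Ei{n}\lambda}$, so $\mu:=\ln a-\ln b\neq 0$ witnesses (B). Conversely, given $\mu,\nu,\lambda$ as in (B) I would fix any $b\in\Rp^{3n+3}$, put $\ln a:=\mu+\ln b$ and $\kappa_j:=\brac{\Ei{n}\lambda}_j/\Phi^{\brac{n}}\brac{b}_j>0$; a one-line check then gives $\ri{n}{\kappa}{b}=\Ei{n}\lambda$ and $\ri{n}{\kappa}{a}=\Ei{n}\nu$, both annihilated by \Si{n} by Lemma~\ref{lem:basis_ker_S}, while $\mu\neq 0$ forces $a\neq b$.

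For (B)$\Leftrightarrow$(C) I would read off the block structure of both sides of~(\ref{eq:trans_eq}). Writing $\nu$ in consecutive triples $\nu_{\brac{i}}=\brac{\nu_{3i-2},\nu_{3i-1},\nu_{3i}}^T$, the $i$-th length-$6$ block of $\Ei{n}\nu$ is $E\nu_{\brac{i}}=\brac{\nu_{3i-2}+\nu_{3i},\,\nu_{3i-2},\,\nu_{3i},\,\nu_{3i-1}+\nu_{3i},\,\nu_{3i-1},\,\nu_{3i}}^T$, while by~(\ref{eq:Y_direct}) and Table~\ref{tab:VarAssignment} the $i$-th block of ${\Yi{n}}^T\mu$ is $\brac{\mu_1+\mu_{3i-1},\,\mu_{3i+1},\,\mu_{3i+1},\,\mu_3+\mu_{3i+2},\,\mu_{3i+3},\,\mu_{3i+3}}^T$. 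Matching the repeated entries forces $\nu_{3i-2}/\lambda_{3i-2}=\nu_{3i-1}/\lambda_{3i-1}=\nu_{3i}/\lambda_{3i}=:t_i$, so the $i$-th block of ${\Yi{n}}^T\mu$ is the constant vector $\brac{\ln t_i}\,\uu{1}$; since entry $1$ of block $i$ and entry $4$ of block $i-1$ share the variable $\mu_{3(i-1)+2}$, one gets $\ln t_i-\ln t_{i-1}=\mu_1-\mu_3$ independently of $i$, hence $\ln t_i$ is affine in $i$ and equals $\brac{2-i}\xi_1+\brac{i-1}\xi_2$ with $\xi_1:=\ln t_1$, $\xi_2:=\ln t_2$. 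This is precisely ${\Yi{n}}^T\mu=\PPi{n}\xi$ together with~(\ref{eq:def_nu_1})--(\ref{eq:def_nu_3}), while~(\ref{eq:lambda_free}) merely records that $\lambda>0$ was arbitrary. The converse is the same computation run backwards, using that the ratio of two summed entries equals the common ratio of the summands.

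For (C)$\Leftrightarrow$(D), observe first that for given $\mu$ and $\xi$ the vectors $\nu,\lambda$ of~(C) always exist: choose $\lambda>0$ freely and let~(\ref{eq:def_nu_1})--(\ref{eq:def_nu_3}) define a positive $\nu$. Hence (C) reduces to the existence of $\mu\neq 0$ with ${\Yi{n}}^T\mu\in\im\brac{\PPi{n}}$, and it suffices to show $\{\mu:{\Yi{n}}^T\mu\in\im\brac{\PPi{n}}\}=\im\brac{\Mi{n}}$. For ``$\supseteq$'' I would substitute $\mu=\Mi{n}z$ and check, block by block from~(\ref{eq:def_M0})--(\ref{eq:def_Mn}), that every length-$6$ block of ${\Yi{n}}^T\mu$ equals $\big(\brac{2-i}z_2+\brac{i-1}z_3\big)\uu{1}$, i.e.\ ${\Yi{n}}^T\mu=\PPi{n}\xi$ with $\xi=\brac{z_2,z_3}^T$. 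For ``$\subseteq$'', given ${\Yi{n}}^T\mu=\PPi{n}\xi$ the block description above expresses $\mu_{3i+1},\mu_{3i+2},\mu_{3i+3}$ ($i=1,\ldots,n$) as well as $\mu_2,\mu_3$ explicitly as affine functions of $\mu_1,\xi_1,\xi_2$, and matching these with the rows of $M\brac{0,n}$ and $M\brac{i,n}$ identifies $\mu=\Mi{n}z$ with $z=\big(\brac{1-n}\xi_1+n\xi_2-\mu_1,\ \xi_1,\ \xi_2\big)^T$; here $n\geq 2$ is what produces the overlapping blocks giving $\mu_3=\mu_1+\xi_1-\xi_2$. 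I expect the main obstacle to be purely the bookkeeping in this last step---tracking the index shifts $x_{1+3i},x_{2+3i},x_{3+3i}$ of Table~\ref{tab:VarAssignment} when extracting the blocks of ${\Yi{n}}^T$ and lining up the four scalar constraints coming from a block of $\PPi{n}\xi$ against the three rows of each $M\brac{i,n}$. Packaging everything as ``the common block value is affine in $i$ with slope $\mu_1-\mu_3$'' keeps this manageable and makes the shapes of \PPi{n} and \Mi{n} transparent.
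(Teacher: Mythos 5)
Your proposal is correct, and it differs from the paper's proof in one essential respect. The paper establishes the one-way cycle $(A)\Rightarrow(D)\Rightarrow(C)\Rightarrow(B)\Rightarrow(A)$ and obtains the hardest link, $(A)\Rightarrow(D)$, by citing Theorem~4.3 of \cite{ToricRN} (which gives $\ln b-\ln a\in\im\brac{A^T}$ for a matrix $A$ satisfying $A^T=\Mi{n}\,Q$ with $Q$ nonsingular, whence $\im\brac{A^T}=\im\brac{\Mi{n}}$). You avoid this external input entirely: you get $(A)\Leftrightarrow(B)$ directly from Lemma~\ref{lem:basis_ker_S} together with the (correct) observation that $Ev>0\Leftrightarrow v>0$, so that a positive steady state forces $\ri{n}{\kappa}{a}=\Ei{n}\nu$ with $\nu\in\Rp^{3n}$; and you close the loop by proving the set identity $\{\mu:\ {\Yi{n}}^T\mu\in\im\brac{\PPi{n}}\}=\im\brac{\Mi{n}}$ in \emph{both} directions. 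The inclusion ``$\supseteq$'' is exactly the paper's computation ${\Yi{n}}^T\,\Mi{n}$ block by block (its step $(D)\Rightarrow(C)$), while your inclusion ``$\subseteq$'' --- solving the block equations for all components of $\mu$ in terms of $\mu_1,\xi_1,\xi_2$, with the overlap of consecutive blocks supplying $\mu_3=\mu_1+\xi_1-\xi_2$ for $n\geq 2$, and exhibiting $z=\brac{(1-n)\xi_1+n\xi_2-\mu_1,\ \xi_1,\ \xi_2}^T$ --- is precisely what replaces the appeal to \cite{ToricRN}; I checked this computation against \eqref{eq:def_M0}--\eqref{eq:def_Mi} and it is right. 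Your $(B)\Leftrightarrow(C)$ matching argument (the repeated entries of $\Yo{i}^T\mu$ force equal ratios $\nu_{3i-j}/\lambda_{3i-j}$ within each triple, and the shared variable $\mu_{3(i-1)+2}$ makes the common block value affine in $i$ with increment $\mu_1-\mu_3$) contains the paper's $(C)\Rightarrow(B)$ computation together with its converse. The net effect is a fully self-contained proof of the theorem; the price is that you must verify both directions of each equivalence, whereas the paper's one-way cycle lets it import the single implication $(A)\Rightarrow(D)$ from prior work. No gaps.
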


\begin{remark}[Solutions $a$, $b$ and $\kappa$ to the polynomial
  equations~(\ref{eq:multistat_ode_x0})\&(\ref{eq:multistat_ode_x1})]\label{rem:solutions_polynomials}\mbox{}
  \rm \\ 
 Recall the matrices \Mi{n} and \Ei{n} (cf.\ eq.~(\ref{eq:def_Mn}) and
   eq.~(\ref{eq:Ei})) and  the
  monomial function $\Phi^{\brac{n}}$ from eq.(\ref{eq:def_rnkx})
  with the properties (\ref{eq:YT_mu}) and (\ref{eq:1_over_phi}) stated in
  Remark~\ref{rem:rkx_Phi_x}. Fix any two
  vectors $\mu\in\im\brac{\Mi{n}}$ and $\lambda\in\Rp^{6n}$. Then
  vectors ($\kappa$, $a$) and ($\kappa$, $b$) satisfying the
  polynomial equations (\ref{eq:multistat_ode_x0}) \&
  (\ref{eq:multistat_ode_x1}) are determined by 
  \begin{displaymath}
    \diag(\kappa)\Phi^{(n)}(e^{\ln a}) = E^{(n)}\nu\, , \quad
    \diag(\kappa)\Phi^{(n)}(e^{\ln b})=E^{(n)}\lambda\, , \quad
    \brac{\nu,\lambda}\in\Rp^{3n}\times\Rp^{3n}\, .
  \end{displaymath}
  By (\ref{eq:YT_mu}), they fulfill (\ref{eq:trans_eq}) for
  $\mu=\ln\brac{\frac{a}{b}}$. Thus, vectors ($\kappa$, $a$) and
  ($\kappa$, $b$) given by
  \begin{subequations}
    \begin{align}
      a &\text{, free in $\Rp^{3n+3}$,} \\\label{eq0:ln_b-ln_a_im_M}
      b &:= \diag\brac{e^\mu}\, a,\\
      \label{eq:def_k}
      \kappa &:= \diag\brac{\Phi^{\brac{n}}\brac{a^{-1}}}\, \Ei{n}\, \lambda.
    \end{align}
  \end{subequations}
  are solutions to the polynomial
  equations~(\ref{eq:multistat_ode_x0})\&(\ref{eq:multistat_ode_x1}).
  Note that, in particular, the vector $a\in\Rp^{3n+3}$ is free and
  that $a$ and $b$ satisfy by construction
  \begin{equation}
    \label{eq:ln_b-ln_a_im_M}
    \ln b - \ln a = \mu \in\im\brac{\Mi{n}}\, .
  \end{equation}
  This follows from the proof of Theorem~\ref{theo:poly_sol}, given in
  Section~\ref{sec:proof_poly_sol} below. See also
  \cite[Remark~7]{fein-043}. 
  In Section~\ref{sec:discussion}, we will interpret the fact that
  eq.~(\ref{eq:def_Mi}) implies $\mu_{3i+1}=\mu_{3i+3}$ in
  eq.~(\ref{eq:ln_b-ln_a_im_M}) for $i=1,2,,...,n$.
  \hbm
\end{remark}

As a consequence of Theorem~\ref{theo:poly_sol}, any two distinct solutions
$a$ and $b$ to the polynomial equations (\ref{eq:multistat_ode_x0}) \&
(\ref{eq:multistat_ode_x1}) satisfy the
condition~(\ref{eq:ln_b-ln_a_im_M}) with $\mu\neq 0$. And vice versa
in case $\kappa$ is chosen according to (\ref{eq:def_k}).
From Definition~\ref{defimulti}  
follows that for multistationarity $a$ and $b$ {\em additionally} have
to satisfy ${\Zi{n}}\, \brac{b-a} = 0$ and hence
$b-a\in\im\brac{\Si{n}}$. The existence of vectors $a$ and $b$
satisfying both conditions is covered by
Corollary~\ref{coro:sign_condi} given below (this result itself is a
consequence of \cite[Lemma~1]{cc-08}; see also \cite{fein-043} and,
for example, \cite{fm-95b} for an earlier, more informal
discussion). The corollary is based on the sign patterns of vectors
and sets thereof.
\begin{definition}[Sign patterns defined by vectors and linear subspaces]\label{signsub} \mbox{} \\
  Given $v\in\R^{m}$, a vector $\sigma\in\{-1,0,1\}^{m}$ with
  \begin{subequations}
    \begin{equation}
      \label{eq:def_sign_pattern}
      \sigma_i = \sig{v_i}
    \end{equation}
    is called the {\rm sign pattern} of $v$: $\sigma=\sig{v}$. Given a
    linear subspace  $V\subseteq \R^{m}$, 
    the set $\sig{V}$ of sign patterns  is defined by
    \begin{equation}
      \label{eq:def_sign_pattern_set}
      \sig{V} := \left\{ \sigma\in\{-1,0,1\}^{m}\; |\; \exists v\in
        V\quad \text{with}\quad \sigma = \sig{v} \right\}.
    \end{equation}
  \end{subequations}
\end{definition}
Now we can state the announced result:
\begin{coro}[cf. Lemma~1 of \cite{cc-08}]\label{coro:sign_condi} \mbox{} \\
  Two vectors $a$, $b\in\Rp^{3n+3}$ with
  \begin{subequations}\label{eq:ab26}
    \begin{align}
      \label{eq:a_neq_b}
      a &\neq b \\
      \label{eq:ln_condi}
      \ln b - \ln a &\in \im\brac{\Mi{n}}\\
      \label{eq:diff_condi}
      b-a &\in \im\brac{\Si{n}}
    \end{align}
  \end{subequations}
  exist if and only if
  \begin{equation}
    \label{eq:sign_condi}
    \matMi{n} := \sig{\im\brac{\Mi{n}}} \cap
    \sig{\im\brac{\Si{n}}} \neq \{0\}\ . 
  \end{equation}
  In case (\ref{eq:sign_condi}) holds, every pair $\mu\in\im\brac{\Mi{n}}$,
  $s\in\im\brac{\Si{n}}$ with
  \begin{equation}
    \sig{\mu} = \sig{s} \neq 0
  \end{equation}
  defines a pair of positive vectors $a$ and $b$ with the property \eqref{eq:ab26} via: 
  \begin{subequations}
    \begin{align}
      \label{eq:def_a_1}
      a &= \brac{a_i}_{i=1,\ldots,3n+3} \\
      \intertext{with}
      \label{eq:def_a_2}
      a_i &=
      \begin{cases}
        \frac{s_i}{e^{\mu_i} -1},\; \text{if
          $\mu_i\neq 0$} \\
        \bar a_i>0\text{, free,}
      \end{cases} \\
      \label{eq:def_b}
      b &= \diag\brac{e^{\mu}}\, a.
    \end{align}
  \end{subequations}
\end{coro}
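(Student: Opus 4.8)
The plan is to derive Corollary~\ref{coro:sign_condi} from \cite[Lemma~1]{cc-08} by a suitable change of variables that turns the two conditions \eqref{eq:ln_condi} and \eqref{eq:diff_condi} into the setting of that lemma. Concretely, I would set $\mu := \ln b - \ln a$ and $s := b - a$. The condition \eqref{eq:ln_condi} says $\mu\in\im(\Mi{n})$, the condition \eqref{eq:diff_condi} says $s\in\im(\Si{n})$, and $a\neq b$ forces $\mu\neq 0$ (equivalently $s\neq 0$). The key algebraic observation, which is the content of \cite[Lemma~1]{cc-08}, is that for positive vectors $a,b$ the pair $(\mu,s)$ obtained this way always satisfies $\sig{\mu}=\sig{s}$ componentwise: indeed $b_i - a_i = a_i\brac{e^{\mu_i}-1}$ with $a_i>0$, and $e^{t}-1$ has the same sign as $t$ for every real $t$. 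Hence the existence of such $a,b$ implies the existence of $\mu\in\im(\Mi{n})$ and $s\in\im(\Si{n})$ with $\sig{\mu}=\sig{s}\neq 0$, which is exactly the statement that $\matMi{n}\neq\{0\}$.

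For the converse, I would run the same computation backwards: given $\mu\in\im(\Mi{n})$ and $s\in\im(\Si{n})$ with $\sig{\mu}=\sig{s}\neq 0$, I define $a$ by \eqref{eq:def_a_1}--\eqref{eq:def_a_2} and $b$ by \eqref{eq:def_b}, and check that this $a$ is genuinely positive, that $a\neq b$, and that \eqref{eq:ln_condi}--\eqref{eq:diff_condi} hold. Positivity of $a_i$ when $\mu_i\neq 0$ follows because $s_i$ and $e^{\mu_i}-1$ have the same (nonzero) sign, so their quotient is positive; when $\mu_i=0$ we have $s_i=0$ as well (since $\sig{\mu}=\sig{s}$), so the $i$-th component of both \eqref{eq:ln_condi} and \eqref{eq:diff_condi} is trivially consistent and $a_i$ may be chosen as an arbitrary positive free parameter $\bar a_i$. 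With this choice one computes $b_i - a_i = a_i(e^{\mu_i}-1) = s_i$ for all $i$ (using $s_i=0=a_i(e^{\mu_i}-1)$ when $\mu_i=0$), so $b-a=s\in\im(\Si{n})$; and $\ln b_i - \ln a_i = \mu_i$ for all $i$, so $\ln b - \ln a = \mu\in\im(\Mi{n})$. Finally $\mu\neq 0$ gives some $b_i\neq a_i$, hence $a\neq b$.

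The main obstacle — or rather the only point that needs genuine care rather than bookkeeping — is the handling of the components where $\mu_i = 0$: one must be sure that on exactly those coordinates $s_i$ also vanishes (which is forced by the hypothesis $\sig{\mu}=\sig{s}$, not by $\mu$ and $s$ being in the respective images) and that the freely chosen positive values $\bar a_i$ do not destroy the two subspace memberships. They do not, because membership in $\im(\Mi{n})$ is a statement purely about $\mu = \ln b - \ln a$, which is unaffected by the actual size of $a_i$ on the $\mu_i=0$ coordinates, and likewise $b_i - a_i = 0 = s_i$ there regardless of $\bar a_i$. So the argument closes. In writing this up I would simply cite \cite[Lemma~1]{cc-08} for the equivalence in the abstract form (existence of $a,b>0$ with prescribed $\ln b - \ln a$ and $b-a$ in given subspaces iff those subspaces share a nonzero sign pattern, with the explicit reconstruction formulas) and note that Corollary~\ref{coro:sign_condi} is the instance with the two subspaces $\im(\Mi{n})$ and $\im(\Si{n})$, the reconstruction formulas \eqref{eq:def_a_1}--\eqref{eq:def_b} being exactly those of \cite{cc-08} specialized here.
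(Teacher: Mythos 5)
Your proposal is correct and follows essentially the same route as the paper, which simply invokes \cite[Lemma~1]{cc-08} with $a$, $b$, $\im\brac{\Mi{n}}$ and $\im\brac{\Si{n}}$ in the roles of $p$, $q$, $M_1$ and $M_2$; your additional spelled-out verification (the identity $b_i-a_i=a_i\brac{e^{\mu_i}-1}$ for the forward direction, and the careful treatment of the $\mu_i=0$ coordinates in the reconstruction) is exactly the content of that cited lemma and is sound.
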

\begin{proof}
  Apply \cite[Lemma~1]{cc-08} with $a$ as $p$, $b$ as $q$,
  $\im\brac{M^{(n)}}$ as $M_1$ and $\im\brac{\Si{n}}$ as $M_2$.
\end{proof}
As a consequence one can state a condition for multistationarity in
system (\ref{eq:ode_def}) defined by the network
(\ref{eq:network_dd}) based on the sign patterns defined by the linear
subspaces $\im\brac{\Si{n}}$ and $\im\brac{\Mi{n}}$ and the set
\matMi{n}. We will argue in Remark~\ref{rem:matro-condi} that this
condition is satisfied if and only if at least one out of
$\frac{1}{2}\, \big(3^{3n+3}-1\big)$ linear systems is feasible. Hence
one may in fact establish multistationarity for the polynomial
system~(\ref{eq:ode_def}) by an analysis of linear inequality
systems.
\begin{theorem}[A sign condition for multistationarity of
  systems~(\ref{eq:ode_def})]\label{theo:multi_linear_problem} \mbox{} \\
  Consider the $n$-site sequential distributed phosphorylation
  network~(\ref{eq:network_dd}) and recall the corresponding matrices
  \Si{n},   \Zi{n} and  \Mi{n},
  (cf. eqns.~(\ref{eq:S_direct}),  (\ref{eq:def_Zi})) and (\ref{eq:def_Mn}).
  There exists
  vectors $a$, $b\in\Rp^{3n+3}$ and $\kappa\in\Rp^{6n}$ satisfying
  \begin{displaymath}
    \Si{n}\, \ri{n}{\kappa}{a}=0,\; \Si{n}\, \ri{n}{\kappa}{b}=0,\; {\Zi{n}}\,
    \brac{b-a} = 0,
  \end{displaymath}
  if and only if
  \begin{displaymath}
    \matMi{n} =\sig{\im\brac{\Mi{n}}} \cap
    \sig{\im\brac{\Si{n}}}\neq \{0\}.
  \end{displaymath}
\end{theorem}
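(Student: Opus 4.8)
The plan is to combine Theorem~\ref{theo:poly_sol} with Corollary~\ref{coro:sign_condi}, so that almost nothing new is needed. Recall that Theorem~\ref{theo:poly_sol} established the equivalence of statement~(A)---the existence of $\kappa\in\Rp^{6n}$ and $a\neq b$ in $\Rp^{3n+3}$ solving the polynomial steady-state equations $\Si{n}\,\ri{n}{\kappa}{a}=0$ and $\Si{n}\,\ri{n}{\kappa}{b}=0$---with statement~(D), namely $\ln b-\ln a=\mu\in\im\brac{\Mi{n}}$ with $\mu\neq 0$. Moreover, Remark~\ref{rem:solutions_polynomials} makes the construction explicit: given any $\mu\in\im\brac{\Mi{n}}$ and any $\lambda\in\Rp^{6n}$, one may pick $a\in\Rp^{3n+3}$ freely, set $b:=\diag\brac{e^\mu}\,a$, and set $\kappa:=\diag\brac{\Phi^{\brac{n}}\brac{a^{-1}}}\,\Ei{n}\,\lambda$; this triple solves the two polynomial equations. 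So the content of the theorem is to add the conservation-law constraint ${\Zi{n}}\,\brac{b-a}=0$, equivalently $b-a\in\ker\brac{\Zi{n}}=\im\brac{\Si{n}}$ (the rows of $\Zi{n}$ form a basis of the left kernel of $\Si{n}$, as noted after~(\ref{eq:def_Zi})).

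First I would prove the forward direction. Suppose $a,b\in\Rp^{3n+3}$ and $\kappa\in\Rp^{6n}$ satisfy all three displayed conditions. Since $a,b$ solve the polynomial equations, Theorem~\ref{theo:poly_sol}(A)$\Rightarrow$(D) gives $\mu:=\ln b-\ln a\in\im\brac{\Mi{n}}$ with $\mu\neq 0$ (the distinctness $a\neq b$ is forced: if $a=b$ the third condition is vacuous but $a\neq b$ is required for multistationarity; one must be slightly careful here---see the obstacle below). From ${\Zi{n}}\,\brac{b-a}=0$ and $\ker\brac{\Zi{n}}=\im\brac{\Si{n}}$ we get $s:=b-a\in\im\brac{\Si{n}}$. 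Now observe that since $a,b$ are both positive, for each index $i$ the sign of $s_i=b_i-a_i$ equals the sign of $\ln b_i-\ln a_i=\mu_i$, because $t\mapsto\ln t$ is strictly increasing on $\Rp$; hence $\sig{s}=\sig{\mu}$. Since $\mu\neq 0$, this common sign pattern is nonzero, and it lies in $\sig{\im\brac{\Mi{n}}}\cap\sig{\im\brac{\Si{n}}}=\matMi{n}$, so $\matMi{n}\neq\{0\}$.

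Conversely, suppose $\matMi{n}\neq\{0\}$. Then there is a nonzero sign pattern common to both sets, i.e.\ there exist $\mu\in\im\brac{\Mi{n}}$ and $s\in\im\brac{\Si{n}}$ with $\sig{\mu}=\sig{s}\neq 0$. By Corollary~\ref{coro:sign_condi} this yields positive vectors $a,b\in\Rp^{3n+3}$ with $a\neq b$, $\ln b-\ln a\in\im\brac{\Mi{n}}$ and $b-a\in\im\brac{\Si{n}}$, via the explicit formulae~(\ref{eq:def_a_1})--(\ref{eq:def_b}); in particular $\ln b-\ln a=\mu$ and $b-a=s$. The condition $b-a\in\im\brac{\Si{n}}=\ker\brac{\Zi{n}}$ gives ${\Zi{n}}\,\brac{b-a}=0$. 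It remains to produce a matching $\kappa\in\Rp^{6n}$ making $a$ and $b$ simultaneous steady states. For this I invoke the (A)$\Leftrightarrow$(D) direction of Theorem~\ref{theo:poly_sol} together with Remark~\ref{rem:solutions_polynomials}: since $\mu=\ln b-\ln a\in\im\brac{\Mi{n}}$, statement~(D) holds, so statement~(A) holds and there is a $\kappa\in\Rp^{6n}$ with $\Si{n}\,\ri{n}{\kappa}{a}=0$ and $\Si{n}\,\ri{n}{\kappa}{b}=0$---concretely, $\kappa:=\diag\brac{\Phi^{\brac{n}}\brac{a^{-1}}}\,\Ei{n}\,\lambda$ for any $\lambda\in\Rp^{6n}$, as in eq.~(\ref{eq:def_k}). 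This $\kappa$, together with $a$ and $b$, satisfies all three required equations.

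The main obstacle---and the only nonroutine point---is the bookkeeping around the requirement $a\neq b$ versus $\mu\neq 0$ and the logical glue between Theorem~\ref{theo:poly_sol} (which speaks of the polynomial equations alone) and Corollary~\ref{coro:sign_condi} (which speaks of the pair of linear conditions alone). Specifically, Theorem~\ref{theo:poly_sol}(D) asserts $\mu\neq 0$, which for positive $a,b$ is exactly equivalent to $a\neq b$, so the two notions of distinctness coincide and no gap appears; but I would state this equivalence explicitly. One should also double-check that Corollary~\ref{coro:sign_condi}'s hypothesis list~(\ref{eq:ab26}) matches what Theorem~\ref{theo:poly_sol} needs: $\ln b-\ln a\in\im\brac{\Mi{n}}$ is precisely condition~(D), and $b-a\in\im\brac{\Si{n}}$ is precisely the $\Zi{n}$-condition rewritten, so the pieces fit together with no residual computation. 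Everything else is an application of results already proved; in particular no new linear-algebra or sign-pattern lemma is required.
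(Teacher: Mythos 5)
Your proposal is correct and follows essentially the same route as the paper: both directions are obtained by gluing Theorem~\ref{theo:poly_sol} (with the explicit construction of $\kappa$ from Remark~\ref{rem:solutions_polynomials}, eq.~(\ref{eq:def_k})) to Corollary~\ref{coro:sign_condi}, using $\ker\brac{\Zi{n}}=\im\brac{\Si{n}}$ to translate the conservation constraint. The only cosmetic difference is that in the forward direction you verify $\sig{b-a}=\sig{\ln b-\ln a}$ directly by monotonicity of the logarithm rather than citing the corollary, and you make explicit the (correct) observation that for positive $a,b$ the conditions $a\neq b$ and $\mu\neq 0$ coincide.
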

\begin{proof}
  Suppose that $\matMi{n}\neq \{0\}$.
  Then, by Corollary~\ref{coro:sign_condi} vectors $a$ and $b$
  satisfying (\ref{eq:a_neq_b}) -- (\ref{eq:diff_condi}) are given by
  (\ref{eq:def_a_1}) -- (\ref{eq:def_b}). By
  Remark~\ref{rem:solutions_polynomials} (where $a>0$ is free and may
  be chosen as in (\ref{eq:def_a_1}) \& (\ref{eq:def_a_2})) and
  Theorem~\ref{theo:poly_sol} there exists a vector
  $\kappa\in\Rp^{6n}$ given by eq.(\ref{eq:def_k}) such that $a$, $b$
  and $\kappa$ are solutions of the polynomials
  (\ref{eq:multistat_ode_x0}) \& (\ref{eq:multistat_ode_x1}). By
  (\ref{eq:diff_condi}) $a$ and $b$ additionally satisfy $\Zi{n}\,
  \brac{b-a} = 0$.

  Vice versa, let $a$, $b\in\Rp^{3n+3}$ with $a \neq b$ and
  $\kappa\in\Rp^{6n}$ be given, such that (\ref{eq:multistat_ode_x0}),
  (\ref{eq:multistat_ode_x1}) \& 
  (\ref{eq:multistat_con_rel_x0_x1}) hold. From
  eq.~(\ref{eq:mu_im_M}) of Theorem~\ref{theo:poly_sol} follows
  that $\mu := \ln b -\ln a \in \im\brac{\Mi{n}}$ (cf.\ also
  Remark~\ref{rem:solutions_polynomials},
  eq.~(\ref{eq:ln_b-ln_a_im_M})) and, as $\Zi{n}\, \brac{b-a}=0
  \Leftrightarrow b-a \in \im\brac{\Si{n}}$,
  Corollary~\ref{coro:sign_condi} implies that (\ref{eq:sign_condi})
  holds.
\end{proof}
\begin{remark}[The condition $\matMi{n}\neq \{0\}$]\label{rem:matro-condi} \mbox{} \rm  
  \begin{enumerate}[{(}A{)}]
  \item $\matMi{n}\neq \{0\}$ is equivalent to feasibility
      of (at least) one linear system:\
    We  emphasize the relation between condition
    (\ref{eq:sign_condi}) and linear inequality systems.
    An element $\sigma\in\{-1,0,1\}^{3n+3}$ is an element of $\matMi{n}$ from
    (\ref{eq:sign_condi}) if and only if the following linear system
    is feasible:
    \begin{equation}
      \label{eq:lin}
      \begin{split}
        &\exists s\in\R^{3n+3}\;  \text{and $\xi\in\R^3$ such that} \\
        &{\Zi{n}}\, s = 0\text{, with $\sigma_i\, s_i >0$ if
          $\sigma_i\neq 0$ and $s_i = 0$ else, \ and} \\
        &\mu = \Mi{n}\, \xi\text{, with $\sigma_i\, \mu_i >0$ if
          $\sigma_i\neq 0$ and $\mu_i = 0$ else.}
      \end{split}
    \end{equation}
  \item Elements of $\{-1,1\}^{3n+3}$: \
    Recall the formulae~(\ref{eq:def_a_1})--(\ref{eq:def_b}) and
    observe that $\mu_i = 0 $ $\Leftrightarrow$ $a_i = b_i$. Hence a
    vector $\mu$ (and corresponding $s$) with $\sig{\mu} \in
    \{-1,1\}^{3n+3}$ yields $a$ and $b$ that differ in every
    coordinate, while a  vector $\mu$ (and corresponding $s$) with
    $\sig{\mu} \in \{-1,0,1\}^{3n+3}$, $\mu\neq 0$ yields $a$ and $b$
    that may be equal in some coordinates.
  \item Testing $\matMi{n} \neq \{0\}$: \
    Note that there are $3^{3\brac{n+1}}$ ($2^{3\brac{n+1}}$) sign
    patterns $\sigma\in\{-1,0,1\}^{3n+3}$
    ($\sigma\in\{-1,1\}^{3n+3}$). A naive algorithm for computing
    nontrivial elements $\sigma$ of $\matMi{n}$ 
    from (\ref{eq:sign_condi}) 
    would therefore
    be to enumerate $\frac{1}{2}\, \big(3^{\brac{3n+3}} -1\big)$ ($
    2^{\brac{3n+2}}$) of these sign patterns and check feasibility
    of every linear system (\ref{eq:lin}) associated to one of those. (As
    feasibility of (\ref{eq:lin}) for an element $\sigma$ implies
    feasibility for $-\sigma$, one has to check only half of the sign
    patterns.) For the general problem of computing all sign patterns
    defined by two linear subspaces, a more advanced algorithm
    involving mixed integer linear programming is described in the
    thesis \cite{stj-09}. This software has aided the discovery of the
    sign patterns described below.
    \hbm
  \end{enumerate}
\end{remark}

Before turning to the elements of \matMi{n}, we conclude this section
by examining the linear system~(\ref{eq:lin}) for $n=2$:

\begin{example}[Steady states and corresponding rate constants for n=2
  via the system~(\ref{eq:lin})]\label{exa:rc}\mbox{} \rm  \\
  From (\ref{eq:def_M0}) -- (\ref{eq:def_Mn}) and (\ref{eq:def_Zi})
  we obtain \\ 
  \begin{equation*}
    \Mi{2} = 
    {\footnotesize \left[
        \begin{array}{rrr}
          -1 & -1 & 2 \\
          1 & 2 & -2 \\
          -1 & 0 & 1 \\\hline
          0 & 1 & 0 \\
          1 & 1 & -1 \\
          0 & 1 & 0 \\\hline
          0 & 0 & 1 \\
          1 & 0 & 0 \\
          0 & 0 & 1 \\
        \end{array}
      \right]}
    \ \mbox{ and } \
    \Zi{2} =  
    {\footnotesize \left[
        \begin{array}{ccc|ccc|ccc}
          1 & 0 & 0 & 1 & 0 & 0 & 1 & 0 & 0 \\
          0 & 0 & 1 & 0 & 0 & 1 & 0 & 0 & 1 \\
          0 & 1 & 0 & 1 & 1 & 1 & 1 & 1 & 1 \\
        \end{array}
      \right]}\ .
  \end{equation*}
  We observe that,  for the sign pattern
  \begin{displaymath}
    \sigma^{(2)}:=
    \left(\phantom{-}1,\phantom{-}1,\phantom{-}1\,
      |\phantom{-}1,-1,\phantom{-}1\, |-1,-1,-1\right)^T\, , 
  \end{displaymath}
  the columns of the matrices
  \begin{equation*}
    B= {\footnotesize \left[
        \begin{array}{rrr}
          1&0&0\\
          0&1&0\\
          2&1&1\\\hline
          1&1&0\\
          -1&0&-1\\
          1&1&0\\\hline
          0&0&-1\\
          -2&-1&-2\\
          0&0&-1
        \end{array}
      \right]}\ \mbox{ and } \
    C= {\footnotesize \left[
        \begin{array}{rrr|rrr}
          1&0&0&0&0&0\\
          1&1&0&1&0&1\\
          0&1&0&0&0&0\\\hline
          0&0&1&0&0&0\\
          0&0&0&-1&0&0\\
          0&0&0&0&1&0\\\hline
          -1&0&-1&0&0&0\\
          0&0&0&0&0&-1\\
          0&-1&0&0&-1&0  
        \end{array}
      \right]}
  \end{equation*}
  form a basis of $\im\brac{\Mi{2}}$ and of $\ker\brac{\Zi{2}}$ respectively.
  The expressions
  \begin{equation}\label{rev1.1}
    \mu=B\beta \in \, \im\brac{\Mi{2}}, \, \beta\in\Rp^3\, , \ \mbox{ and } \   
    s=C\gamma\in \ker\brac{\Zi{2}}\, \, ,\, \gamma\in\Rp^6\, ,
  \end{equation}
  define feasible vectors for  (\ref{eq:lin}) with the given sign pattern $\sigma^{(2)}$.
  Hence $\sigma^{(2)}$ defines a feasible system (\ref{eq:lin}):
  $\sigma^{(2)}\in\matMi{2}$.
  For a numerical example, we choose
  \begin{gather*}
    \beta_{1} = \ln(3/2),\, 
    \beta_{2} = \ln(4/3),\, 
    \beta_{3} = \frac{\ln2}{2},\\
    \gamma_{1} = 1,\, 
    \gamma_{2} = 1/2,\, 
    \gamma_{3} = 1,\, 
    \gamma_{4} = 1/4,\, 
    \gamma_{5} = 1,\, 
    \gamma_{6} = 1/4\, . 
  \end{gather*}
  Using these $\beta\in \Rp^3$ and $\gamma\in \Rp^6$, one obtains
  the vectors $\mu$ and $s$ given in Table~\ref{tab:mu_xi_ss}. Hence, by
  (\ref{eq:def_a_1}), (\ref{eq:def_a_2}) and (\ref{eq:def_b}), one
  arrives at the steady states $a$ and $b$ as given in
  Table~\ref{tab:mu_xi_ss}. The associated rate constant vectors $\kappa$ 
  are obtained by (\ref{eq:def_k}) for the value of $a$ from
  Table~\ref{tab:mu_xi_ss} and for free $\lambda\in\Rp^6$.
  Here we choose
  \begin{displaymath}
    \lambda = \left(
      30 \sqrt{2},\, 
      27,\, 
      30 \sqrt{2},\, 
      2+\sqrt{2},\, 
      2+\sqrt{2},\, 
      2+\sqrt{2}\, 
    \right)^T
  \end{displaymath}
  to obtain the rate constants given in
  Table~\ref{tab:rate_constants}. For the illustration in Fig.~\ref{fig:conti},
  a numerical continuation of the steady state
  $a$ has been performed using the rate constants from
  Table~\ref{tab:rate_constants}. The numerical continuation reveals
  that, for the rate constants of
  Table~\ref{tab:rate_constants} and the total concentrations $c=\Zi{2}a=\Zi{2}b$ defined
  by either $a$ or $b$, the system has three steady states, namely  $a$, $b$ and
  \begin{displaymath}
    \tilde{b} = \left(1.52..,
      1.01..,
      0.01..,
      0.12..,
      0.74..,
      0.12..,
      8.17..,
      4.39..,
      6.13..
      \, 
    \right)^T\, .
  \end{displaymath}
  A local
  stability analysis shows the exponential stability of $b$ and $\tilde{b}$.
  The steady state $a$ turns out to be a saddle point because of the presence of exactly
  one simple positive eigenvalue. Hence the system shows bistability as indicated in 
  Figure~\ref{fig:cartoon_multi}.  
  By the results of \cite{multi-001},
  these are all steady states of the double
  phosphorylation system once $\beta$, $\gamma$ and $\lambda$ are
  fixed.
  \hbm
\end{example}

\begin{table}
  \centering
    \begin{tabular}[c]{|*{4}{>{$}c<{$}|}}\hline
      \mu & s & a & b \\ \hline
      \ln\frac{3}{2} & 1 & 2 & 3 \\
      \ln\frac{4}{3} & 2 & 6 & 8 \\
      -\frac{\ln2}{2}+\ln6 & \frac{1}{2} &
      \frac{1}{34} \left(1+3 \sqrt{2}\right) & \frac{3}{34}
      \left(6+\sqrt{2}\right) \\ 
      \ln2 & 1 & 1 & 2 \\
      \ln\frac{\sqrt{2}}{3} & -\frac{1}{4} &
      \frac{3}{28} \left(3+\sqrt{2}\right) & \frac{1}{28} \left(2+3
        \sqrt{2}\right) \\ 
      \ln2 & 1 & 1 & 2 \\
      -\frac{\ln2}{2} & -2 & 2 \left(2+\sqrt{2}\right) & 2
      \left(1+\sqrt{2}\right) \\ 
      -\ln6 & -\frac{1}{4} & \frac{3}{10} & \frac{1}{20} \\
      -\frac{\ln2}{2} & -\frac{3}{2} & 3+\frac{3}{\sqrt{2}} &
      \frac{3}{2} \left(1+\sqrt{2}\right) \\ \hline
    \end{tabular}
    
    \vspace*{2mm}
    \caption{\label{tab:mu_xi_ss}Columns 1\&2: vectors $\mu$ and
        $s$ with $\sig{\mu} = \sig{s}$ satisfying the system
        (\ref{eq:lin}) (with $n=2$ and $\sigma^{(2)}$). Columns 3\&4:
        Steady state vectors $a$ and $b$ computed from $\mu$ and $s$
        via (\ref{eq:def_a_1}), (\ref{eq:def_a_2}) \&
        (\ref{eq:def_b}). Simple computation shows $c=\Zi{2}\, a =
        \Zi{2}\, b = \left(7+2 \sqrt{2},\frac{1}{34} \left(137+54
            \sqrt{2}\right),\frac{1}{140} \left(2187+505
            \sqrt{2}\right)\right)^T$.}
\end{table}

\begin{table}
  \centering
    \begin{tabular}[c]{|*{12}{>{$}c<{$}|}} \hline
      k_{1} & k_{2} & k_{3} &  l_1 &
      l_2 & l_3 & k_{4} & k_{5} & k_{6} &
      l_{4} & l_{5} & l_6 \\ \hline
      5 \sqrt{2} & 30 \sqrt{2} & 30 \sqrt{2} & 952 & 27 & 30 \sqrt{2} &
      \frac{4}{3} \left(4+\sqrt{2}\right) & \frac{1}{2} & \frac{1}{2} &
      \frac{40}{3} \left(4+5 \sqrt{2}\right) & \frac{2}{3} & \frac{2}{3}
      \\ \hline
    \end{tabular}
    
    \vspace*{2mm}
  \caption{\label{tab:rate_constants}One choice of rate
      constants such that $a$ and $b$ from Table~\ref{tab:mu_xi_ss}
      are steady states (i.e.\ 
      $\kappa=(k_1$, $k_2$, $k_3$, $l_1$, $l_2$, $l_3$, $k_4$, $k_5$,
      $k_6$, $l_4$, $l_5$, $l_6)^T$ with $\Si{2}\, \ri{2}{\kappa}{a} = \Si{2}\,
      \ri{2}{\kappa}{b} =0$).}
\end{table}

\begin{figure}
  \centering
  \includegraphics[width=.7\linewidth]{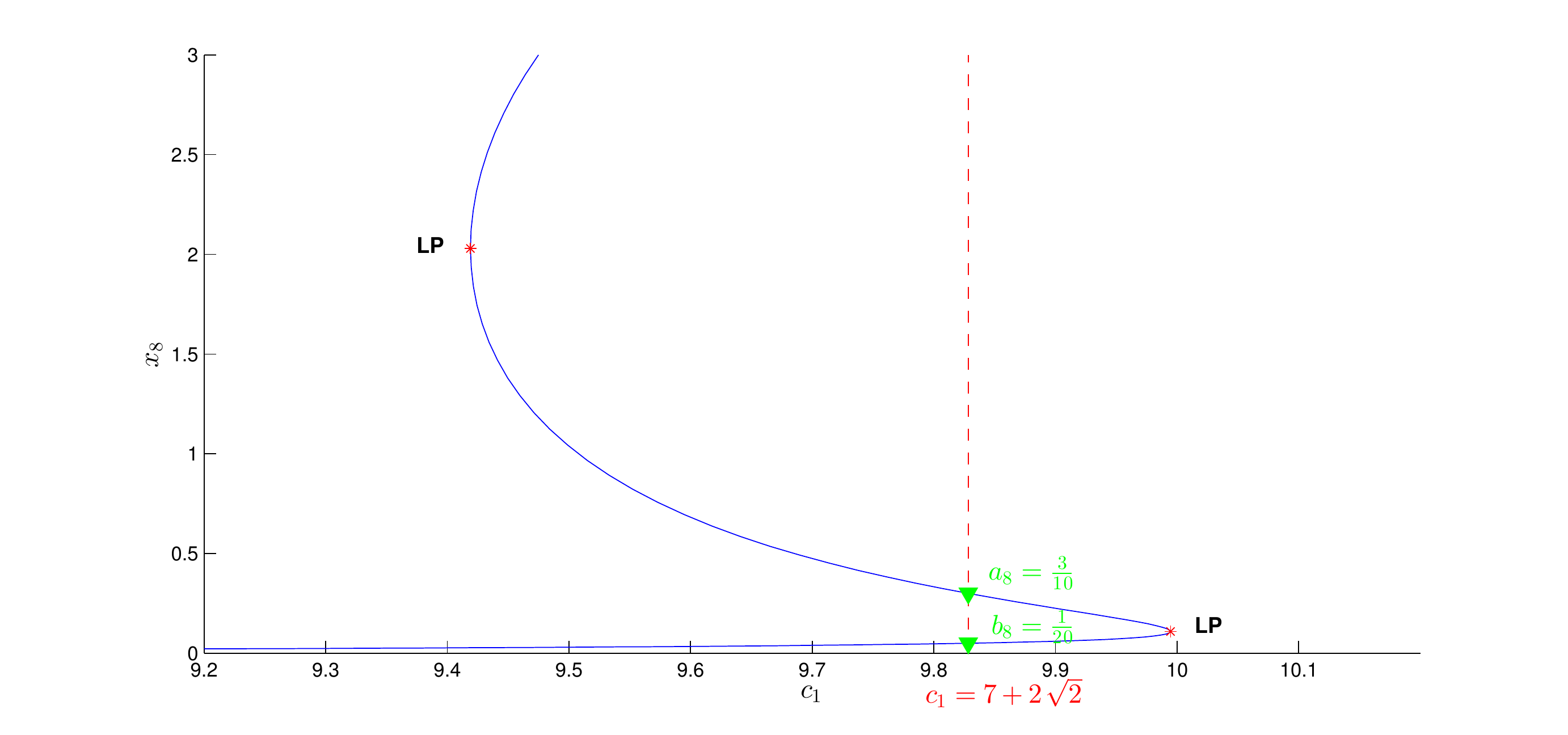}
  \caption{\label{fig:conti}Numerical continuation starting in $a$
    from Table~\ref{tab:rate_constants}. Here $x_8$ (i.e.\
    the concentration of $A_{2P}$) is plotted against $c_1$ (i.e.\ the
    total concentration of kinase $E_1$). Green triangles mark the
    components $a_8$ and $b_8$ of the steady state vectors $a$ and $b$
    from Table~\ref{tab:mu_xi_ss}, the total concentration $c_1$ at $a$
    (and $b$) is displayed in red. The  component $\tilde{b}_8$
      of the third steady state is close to 4.39.
    }
\end{figure}

\subsection{Sign patterns for feasibility of  system~(\ref{eq:lin})}
\label{sec:feasible_patterns}

Given a  $\sigma^{(n)}\in
\{-1,1\}^{3n+3}$, 
we first turn to the subproblem ${\Zi{n}}\, s = 0$  with
$\sig{s}=\sigma^{(n)}$ in ~(\ref{eq:lin}).  To this end, we consider sign patterns in column
and matrix form. With $(n+1)$ column vectors
$\tau_{(0)}=(\tau_{10},\tau_{20},\tau_{30})^T$, \dots,
$\tau_{(n)}=(\tau_{1n},\tau_{2n},\tau_{3n})^T$ from $\{-1,1\}^3$ we
associate  the sign pattern matrix 
\begin{equation}
  \label{taum}
  \tau=\left(\tau_{(0)}\, \cdots \, \tau_{(n)}\right)\in\{-1,1\}^{3\times (n+1)}\, .
\end{equation}
and the  sign pattern column
$\tau^{(n)}=\col(\tau)=\col(\tau_{(0)},...,\tau_{(n)})\in\{-1,1\}^{3n+3}$. 

\begin{lem}[
Feasibility of $\Zi{n}\, s = 0$ and $\sig{s} =\sigma^{(n)}\in \{-1,1\}^{3n+3}$]
\label{lem:S_feas1} \mbox{} \\
  For a given sign pattern matrix 
  $$\sigma =(\sigma_{(0)},...,\sigma_{(n)})=\left(\begin{array}{ccc}
      \sigma_{10}& ... & \sigma_{1n}\\
      \sigma_{20}& ... & \sigma_{2n}\\
      \sigma_{30}& ... & \sigma_{3n}
    \end{array}\right)
  \in \{-1,1\}^{3\times (n+1)}$$ 
  and its associated sign pattern vector
  $\sigma^{(n)}=col(\sigma_{(0)},...,\sigma_{(n)})\in \{-1,1\}^{3n+3}$, the subproblem
  \begin{equation}\label{subprob1}
    \Zi{n}\, s = 0 \ \mbox{ with } \ \sig{s} = \sigma^{(n)}
  \end{equation}
  in ~(\ref{eq:lin}) is not feasible if and only if $ \sigma$ 
  has one of the following three properties:
  \begin{enumerate}
  \item The elements of the first row are all of the same sign ${\sigma_{10}}$.
  \item The elements of the second row are all of the same sign ${\sigma_{20}}$ with
    ${\sigma_{20}}\sigma_{10}<0$ and ${\sigma_{20}}\sigma_{30}<0$.
  \item The elements of the third row are all of the same sign ${\sigma_{30}}$.
  \end{enumerate}
\end{lem}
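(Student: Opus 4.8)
The plan is to make the three scalar equations in $\Zi{n}\, s = 0$ explicit from the block form~(\ref{eq:def_Zi}) and then decide, row by row, when they can be satisfied with the prescribed strict signs. Writing $s=\col(s_{(0)},\dots,s_{(n)})$ as in Remark~\ref{rem:notation}, the first row of $\Zi{n}$ yields $\sum_{i=0}^n s_{3i+1}=0$, an equation in the entries carrying the signs $\sigma_{1i}$ (the first row of $\sigma$) only; the second row of $\Zi{n}$ yields $\sum_{i=0}^n s_{3i+3}=0$, in the entries with signs $\sigma_{3i}$ (the third row of $\sigma$) only; and the third row of $\Zi{n}$ yields $s_2+\sum_{i=1}^n(s_{3i+1}+s_{3i+2}+s_{3i+3})=0$. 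Subtracting the first two equations from the third, the system becomes equivalent to the decoupled form
\[
  \sum_{i=0}^n s_{3i+1}=0,\qquad \sum_{i=0}^n s_{3i+3}=0,\qquad \sum_{i=0}^n s_{3i+2}=s_1+s_3,
\]
in which the only variables shared between the three equations are $s_1$ (between the first and the third) and $s_3$ (between the second and the third).

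The one elementary fact I would isolate is this: for prescribed signs $\epsilon_0,\dots,\epsilon_m\in\{-1,1\}$, the set of values attained by $\sum_j x_j$ as the $x_j$ range over all reals with $\sign(x_j)=\epsilon_j$ equals $(0,\infty)$ if every $\epsilon_j=1$, equals $(-\infty,0)$ if every $\epsilon_j=-1$, and equals $\R$ otherwise. Applied to the first equation, it is solvable exactly when the first row of $\sigma$ is not constant, i.e.\ when property~(1) fails; and whenever it is solvable one may moreover prescribe $s_1$ to be \emph{any} real with $\sign(s_1)=\sigma_{10}$ and still complete $s_4,\dots,s_{3n+1}$ (if $\sigma_{11}=\dots=\sigma_{1n}$, then non-constancy of the row forces $\sigma_{10}$ to be the opposite sign, so the required partial sum $-s_1$ has an attainable sign; if those signs are mixed, every partial sum is attainable). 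The same remark applies to the second equation, the third row of $\sigma$, and the free entry $s_3$. Hence if property~(1) or property~(3) holds the subproblem~(\ref{subprob1}) is infeasible, and otherwise $s_1$ may be taken to be an arbitrary real with $\sign(s_1)=\sigma_{10}$ and $s_3$ an arbitrary real with $\sign(s_3)=\sigma_{30}$.

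It then remains to analyse the third equation under the hypothesis that~(1) and~(3) fail. By the elementary fact, its left-hand side attains the set $\{t:\sigma_{20}\,t>0\}$ when the second row of $\sigma$ is constant equal to $\sigma_{20}$, and all of $\R$ otherwise; its right-hand side $s_1+s_3$ ranges over $\{t:\sigma_{10}\,t>0\}$ when $\sigma_{10}=\sigma_{30}$ and over all of $\R$ when $\sigma_{10}=-\sigma_{30}$. Since $s_1$, $s_3$ and the $s_{3i+2}$ are otherwise unconstrained, the third equation is solvable iff these two sets meet, and a short case check shows they are disjoint precisely when the second row of $\sigma$ is constant equal to $\sigma_{20}$, $\sigma_{10}=\sigma_{30}$, and $\sigma_{20}\ne\sigma_{10}$ --- that is, precisely when $\sigma_{20}\sigma_{10}<0$ and $\sigma_{20}\sigma_{30}<0$, which is property~(2). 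Combining the two cases gives the claimed equivalence: (\ref{subprob1}) is infeasible if and only if $\sigma$ has one of the three properties (1), (2), (3).

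I do not anticipate a real obstacle; the single point needing care is the reduction of the third row of $\Zi{n}$ together with the claim that, once the first two equations are solved, the shared entries $s_1$ and $s_3$ may still be chosen with arbitrary magnitude and the prescribed sign --- in other words, that the apparent coupling through these two coordinates is harmless. Establishing this via the attainable-partial-sum statement above is the crux; the rest is the $\{-1,1\}$ sign bookkeeping spelled out in the three cases.
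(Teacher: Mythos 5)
Your proof is correct and follows essentially the same route as the paper: both reduce $\Zi{n}\, s = 0$ to the decoupled system $\sum_j s_{1j}=0$, $\sum_j s_{3j}=0$, $\sum_j s_{2j}=s_{10}+s_{30}$ and then read off feasibility from the prescribed signs. The paper merely asserts that the claim "follows easily" from this decoupling, whereas you supply the attainable-sum lemma and the verification that the coupling through $s_{10}$ and $s_{30}$ is harmless, which is exactly the detail the paper leaves implicit.
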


\begin{proof}
  In analogy to $\sigma^{(n)}$, we use the notation
  $s=(s_{10},s_{20},s_{30},s_{11},s_{21},s_{31},\, ......\, ,
  s_{1n},s_{2n},s_{3n})^T$.
  The claim follows easily from the fact that $\Zi{n}\, s = 0$ is
  equivalent to the decoupled linear system
  \begin{displaymath}
    \sum_{j=0}^ns_{1j}=0, \quad \sum_{j=0}^ns_{3j}=0, \quad 
    \sum_{j=0}^ns_{2j}=s_{10}+s_{30}
  \end{displaymath}
  (cf. the definition of $\Zi{n}$ in eq.(\ref{eq:def_Zi})).
\end{proof}

\vspace{5mm}
We now turn to those sign patterns $\sigma^{(n)} \in \{-1,1\}^{3n+3}$,
where (\ref{subprob1}) is feasible and where, additionally, the second
subproblem 
\begin{equation}\label{subprob2}
  \mu = \Mi{n}\, \xi\ \mbox{ with } \ 
  \sig{\mu}=\sigma^{(n)}
\end{equation}
is feasible. In particular, we will show that (\ref{subprob1}) and
(\ref{subprob2}) are only feasible for sign patterns involving
\begin{subequations}
  \begin{align}
    \label{eq:building_blocks1a}
    \sigma_1 &:= \brac{+1,+1,+1}^T, & \sigma_2 &:= \brac{+1,-1,+1}^T, &
    \sigma_3 &:= \brac{-1,-1,-1}^T=-\sigma_1 \\
    \label{eq:building_blocks1b}
    \sigma_4 &:= \brac{+1,-1,-1}^T, & \sigma_5 &:=
    \brac{-1,+1,-1}^T=-\sigma_2\ , & \sigma_6 &:= \brac{-1,+1,+1}^T =
    -\sigma_4,
  \end{align}
  while it is infeasible for sign patterns involving
  \begin{equation}
    \label{eq:building_blocks2}
    \sigma_7:= \brac{+1,+1,-1}^T,\ \sigma_8:= \brac{-1,-1,+1}^T=
    -\sigma_7.
  \end{equation}
\end{subequations}
We use these $\sigma_j$ to encode  sign patterns
$\sigma^{(n)}\in\{-1,1\}^{3n+3}$. For example, for $n=2$ we encode
\begin{equation}\label{sigma2}
  \sigma^{(2)}:=
  \left(\phantom{-}1,\phantom{-}1,\phantom{-}1|\phantom{-}1,-1,\phantom{-}1|-1,-1,-1\right)^T\in
  \R^9 
\end{equation}
as the $3\times 3$ array
\begin{displaymath}
  \sigma = \left(\sigma_1\, \sigma_2\, \sigma_3\right)
\end{displaymath}
with the columns $\sigma_1$, $\sigma_2$ and $\sigma_3$.
We use the symbol $\rbrac{\tau}_i$ as shorthand  notation for the
$i$-fold concatenation of $\tau\in \{-1,1\}^3$, i.e., for the $3\times
i$--matrix
\begin{equation}
  \label{eq:def_concatenation}
  \rbrac{\tau}_i = (\, \underbrace{\tau\,
    \ldots\,\tau}_{\text{$i$ times}}\, )\, .
\end{equation}
For an example with $n=5$, 
\begin{displaymath}
  \sigma = \left(\rbrac{\sigma_3}_3\, \rbrac{\sigma_5}_2\, \sigma_1\right) =
  \left(\sigma_3\sigma_3\sigma_3\sigma_5\sigma_5\sigma_1\right)
\end{displaymath}
stands for the $18$-dimensional sign pattern
\begin{displaymath}\sigma^{(5)} = 
  \left(-1,-1,-1|-1,-1,-1|-1,-1,-1|-1,+1,-1|-1,+1,-1|+1,+1,+1\right)^T\, .
\end{displaymath}
In the following Theorem~\ref{theo:feasible_patterns} 
we characterize
the sign patterns $\sigma^{(n)}\in\{-1,1\}^{3n+3}$
that are feasible for \eqref{eq:lin} for $n\geq 2$.
In Section~\ref{sec:proof_theo_patterns} below we will prove this result.

\begin{theorem}\label{theo:feasible_patterns}
  Recall
  the linear system (\ref{eq:lin}) associated to the sign pattern vector
  $\sigma^{(n)}\in\{-1,1\}^{3n+3}$ and hence  to the equivalent sign pattern matrix $\sigma
  \in \{-1,1\}^{3\times (n+1)}$.
  For $n\geq 2$   the linear
  systems (\ref{eq:lin}) associated to $\sigma^{(n)}$  and  to
  $-\sigma^{(n)}$ are feasible  if and only if $\sigma$ is one of the
  following sign patterns \eqref{eq:s1} to \eqref{eq:s7}:
  \begin{itemize}
  \item For $n\geq 2$:
    \begin{align}      
      \label{eq:s1}
      \tag{$s_1$}
      \sigma &=\left(\sigma_2\, \rbrac{\sigma_3}_{i_1}\,
      \rbrac{\sigma_1}_{i_2}\right)\; \text{with $i_1$, $i_2\geq 1$ and
        $i_1+i_2 = n$} \\
	\label{eq:s2}
	\tag{$s_2$}
      \sigma &= \left(\sigma_3\, \rbrac{\sigma_5}_{i_1}\,
      \rbrac{\sigma_1}_{i_2}\right)\; \text{with $i_1$, $i_2\geq 1$ and
        $i_1+i_2 = n$} \\
      \label{eq:s3}
      \tag{$s_3$}
      \sigma &=\left( \sigma_4\, \rbrac{\sigma_5}_{i_1}\, \rbrac{\sigma_1}_{i_2}\right)\
      \text{with $i_1=1$, $i_2\geq 1$ and $i_1 +i_2= n$} \\
      \label{eq:s4}
      \tag{$s_4$}
      \sigma &=\left(\sigma_4\, \rbrac{\sigma_3}_{i_1}\,
      \rbrac{\sigma_1}_{i_2}\right)\; \text{with $i_1$, $i_2\geq 1$ and
        $i_1+i_2 = n$\, .}
    \end{align}
  \item Additionally for $n\geq 3$:
    \begin{align}
      \label{eq:s5}
      \tag{$s_5$}
      \sigma &=\left(\sigma_2\, \rbrac{\sigma_3}_{i_1}\,
      \rbrac{\sigma_2}_{i_2}\, \rbrac{\sigma_1}_{i_3}\right)\; \text{with $i_1$, 
        $i_2$, $i_3\geq 1$ and $i_1+i_2+i_3 = n$} \\
      \label{eq:s6}
      \tag{$s_6$}
      \sigma &=\left(\sigma_3\, \rbrac{\sigma_3}_{i_1}\,
      \rbrac{\sigma_5}_{i_2}\, \rbrac{\sigma_1}_{i_3}\right)\; \text{with
        $i_1$, $i_2$, $i_3\geq 1$ and $i_1+i_2+i_3 = n$} \\
      \label{eq:s7}
      \tag{$s_7$}
      \sigma &= \left(\sigma_4\, \rbrac{\sigma_3}_{i_1}\, \rbrac{\sigma_5}_{i_2}\,
      \rbrac{\sigma_1}_{i_3}\right)\ \text{with $i_1$ , $i_3\geq 1$, $i_2=1$ and $i_1+i_2+i_3 =
        n$\, .}
    \end{align}
  \end{itemize}
\end{theorem}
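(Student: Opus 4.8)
The plan is to replace the subspace $\im\brac{\Mi{n}}$ by a concrete coordinate description and then run a finite case analysis on the sign pattern, using Lemma~\ref{lem:S_feas1} to dispatch the $\Zi{n}\,s=0$ part of \eqref{eq:lin}. \emph{Step 1 (the structure of $\im\brac{\Mi{n}}$).} Writing $\mu=\Mi{n}\,\xi$ and expanding \eqref{eq:def_M0}--\eqref{eq:def_Mn}, one records, with $d:=\xi_3-\xi_2$ and $c:=\xi_1-\xi_2-(n-1)\,d$, that $\mu_{3i+1}=\mu_{3i+3}=:v_i=\xi_2+(i-1)\,d$ and $\mu_{3i+2}=:m_i=v_i+c$ for $i=1,\ldots,n$, while $\mu_1=d-c$, $\mu_2=m_0:=v_1+c-d$ and $\mu_3=-c$. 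Thus $v_1,\dots,v_n$ and $m_0,m_1,\dots,m_n$ are arithmetic progressions with the \emph{same} common difference $d$ and with $m_i-v_i=c$ constant, and $\Mi{n}$ has rank $3$ for $n\ge 2$. For a zero-free pattern $\sigma^{(n)}\in\{-1,1\}^{3n+3}$, written as $\sigma=(\sigma_{(0)},\dots,\sigma_{(n)})$, the identity $\mu_{3i+1}=\mu_{3i+3}$ forces $\sigma_{(i)}\in\{\sigma_1,\sigma_2,\sigma_3,\sigma_5\}$ for $i\ge1$, so the blocks $\sigma_4,\sigma_6,\sigma_7,\sigma_8$ can occur only in $\sigma_{(0)}$; and since a non-vanishing arithmetic progression has a monotone sign sequence with at most one sign change and $m_i=v_i+c$, the columns $\sigma_{(1)},\dots,\sigma_{(n)}$ read left to right form a block of $\sigma_3$'s, then a block of $\sigma_5$'s \emph{or} one of $\sigma_2$'s (never both), then a block of $\sigma_1$'s when $d>0$, with the mirror picture when $d<0$.

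\emph{Step 2 (``only if'').} Using that \eqref{eq:lin} is feasible for $\sigma^{(n)}$ iff it is for $-\sigma^{(n)}$ (cf.\ Remark~\ref{rem:matro-condi}), I would normalise so that $\sigma$ ends in a block of $\sigma_1$'s and then split according to the signs of $c=-\mu_3$ and $d$; the case $d=0$ is discarded because then $\sigma$ satisfies property (1) or property (2) of Lemma~\ref{lem:S_feas1}. In each of the four remaining cases the signs of $\mu_1=d-c$ and $\mu_3=-c$ pin down the admissible $\sigma_{(0)}$ --- in particular $\sigma_7,\sigma_8$ are excluded from $\sigma_{(0)}$: e.g.\ $\sigma_{(0)}=\sigma_7$ together with $m_0=v_1+c-d$ forces $\xi_2>0$, hence all of $\sigma_{(1)},\dots,\sigma_{(n)}$ equal to $\sigma_1$, contradicting Lemma~\ref{lem:S_feas1}(1) --- the sign of $m_0$ couples $\sigma_{(0)}$ to the column blocks, and Lemma~\ref{lem:S_feas1} then eliminates the remaining shapes, namely those with a constant first, second or third row and those ending in $\sigma_2$ or $\sigma_5$. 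Rewriting block lengths in terms of the parameters $i_1,i_2,\dots$, what survives is exactly \eqref{eq:s1}--\eqref{eq:s7}; the restrictions $i_1=1$ in \eqref{eq:s3} and $i_2=1$ in \eqref{eq:s7} appear because $\sigma_{(0)}=\sigma_4$ forces $d>c>0$, so the interval $(-c,0)$ is shorter than the step $d$ and contains at most one $v_i$, i.e.\ at most one $\sigma_5$--column.

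\emph{Step 3 (``if'').} For each pattern in \eqref{eq:s1}--\eqref{eq:s7} I would exhibit both witnesses explicitly. The $\ker\brac{\Zi{n}}$--part is immediate from the decoupling $\sum_j s_{1j}=\sum_j s_{3j}=0$, $\sum_j s_{2j}=s_{10}+s_{30}$ used in the proof of Lemma~\ref{lem:S_feas1}. For the $\im\brac{\Mi{n}}$--part, Step~1 turns the requirement $\sig{\Mi{n}\,\xi}=\sigma^{(n)}$ into a short system of linear inequalities in the scalars $\xi_1,\xi_2,\xi_3$ (equivalently $d,c,\xi_2$): for \eqref{eq:s1} one takes $0<c<d$ and $\xi_2$ so that $v_{i_1}<0<v_{i_1+1}$ with enough slack that $v_{i_1+1}>c$; for \eqref{eq:s3} and \eqref{eq:s7} one takes $d>c>0$ with exactly one $v_i$ in $(-c,0)$; and analogously in the other cases. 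Solvability is clear, which yields both feasibility assertions.

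The main obstacle is Step~2: neither of the two subproblems in \eqref{eq:lin} is close to sufficient on its own --- $\im\brac{\Mi{n}}$ realises patterns that $\ker\brac{\Zi{n}}$ rejects (for instance patterns ending in $\sigma_2$) and conversely $\ker\brac{\Zi{n}}$ realises patterns with $\sigma_4$ or $\sigma_7$ in interior columns --- so the two conditions must be combined carefully, and the delicate point is controlling where the ``transitional'' columns $\sigma_2,\sigma_5$ sit relative to the $\sigma_1$-- and $\sigma_3$--blocks and how the single scalar $m_0$ links $\sigma_{(0)}$ to that arrangement.
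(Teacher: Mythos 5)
Your proposal is correct and takes essentially the same route as the paper: the paper likewise reparametrizes $\im\brac{\Mi{n}}$ so that the block entries become affine functions of $i$ (its $w_2(i)=x+iz$ and $w_1(i)=w_3(i)=w_2(i)+(y-z)$ are your $m_i$ and $v_i$ with $z=d$ and $y-z=-c$), runs the same case analysis on $\sigma_{(0)}$ against Lemma~\ref{lem:S_feas1} for necessity (including the ``at most one $\sigma_5$'' step via $c<d$), and proves sufficiency by exhibiting explicit $\xi$. The one blemish is a sign slip in your illustrative witness for \eqref{eq:s1}: $\sigma_{(0)}=\sigma_2$ forces $\mu_3=-c>0$, so you need $c<0<d$ with no $v_i$ in the interval $(0,-c)$, not $0<c<d$.
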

By Remark~\ref{rem:matro-condi}(A), $\sigma^{(n)}$  is
a nontrivial element of $\matMi{n}$ from \eqref{eq:sign_condi} if
and only if $\sigma^{(n)}$ is one of the sign patterns \eqref{eq:s1}
to \eqref{eq:s7} or its negative.
As an immediate consequence of Theorem~\ref{theo:feasible_patterns}
above we note the following:

\begin{remark}[$\Mi{n}\neq \{0\}$: multistationarity in
  $n$-site sequential distributive phosphorylation for $n\geq
  2$]\label{rem:multi_possible} \mbox{} \rm\\ 
  As a consequence of Theorem~\ref{theo:multi_linear_problem} and
  \ref{theo:feasible_patterns} we conclude that for every $n\geq 2$
  there exists a vector of rate constants $\kappa$ such that the
  system (\ref{eq:multistat_ode_x0}), (\ref{eq:multistat_ode_x1}) \&
  (\ref{eq:multistat_con_rel_x0_x1}) has at least two distinct positive
  solutions $a$ and $b$. Every element $\sigma$ described in
  Theorem~\ref{theo:feasible_patterns} defines a feasible linear
  system of the form (\ref{eq:lin}). 
  And solutions thereof
  parameterize the steady states $a$ and $b$ and the vectors
  $\kappa$ of rate constants via the equations~(\ref{eq:def_a_1}),
  (\ref{eq:def_a_2}) and (\ref{eq:def_k}) respectively 
  (cf.\ Example~\ref{exa:rc}).
  It will be
  interesting to systematically explore the corresponding regions in
  the rate constant space, especially with respect to biological
  plausibility and function.
\end{remark}  

\begin{remark}[$\matMi{1} = \{0\}$ excludes multistationarity for $n=1$
  (single phosphorylation)] \mbox{} \rm \\
  Any $\mu\in \im\brac{\Mi{1}}$ and any $s\in \ker\brac{\Zi{1}}$
  can be written as 
  \begin{displaymath}
    \mu=(\mu_1,\mu_4-\mu_1,\mu_3,\mu_4,\mu_4-\mu_1,\mu_4)^T \ \mbox{ and } \ 
    s=(s_1,s_2,s_3,-s_1,s_3-s_2+s_1,-s_3)^T
  \end{displaymath}
  respectively. We assume $\sign(\mu)=\sign(s)$. In case of
  $\mu_4=0$ we arrive at $\mu=0=s$. In case of $\mu_4\neq 0$ we
  assume without loss of generality $\mu_4>0$ and are led to the
  sign pattern $\sigi{1}=(-,+,-,+,+,+)^T$ 
  with sign pattern matrix
  \begin{displaymath}
    \sigma = \left(
      \begin{array}[c]{cc}
        - & + \\ + & + \\ - & +
      \end{array}
    \right).
  \end{displaymath}
  By Lemma~\ref{lem:S_feas1}
  these sign patterns cannot belong to an element of
  $\ker\brac{\Zi{1}}$ and we conclude $\matMi{1} = \{0\}$. 
\end{remark}

Finally we want to count the number of sign patterns of the
form~(\ref{eq:s1}) -- (\ref{eq:s7}). To demonstrate the reasoning
we first examine sign patterns of the form (\ref{eq:s2}) for $n=2$,
\ldots, 5:

\begin{example}[The sign pattern~(\ref{eq:s2})]\mbox{} \rm\\
  For $2\leq n\leq 5$ one obtains from the formula (\ref{eq:s2}) the
  following patterns:
  \vspace{2ex}
  \begin{center}
    \includegraphics[width=.7\linewidth]{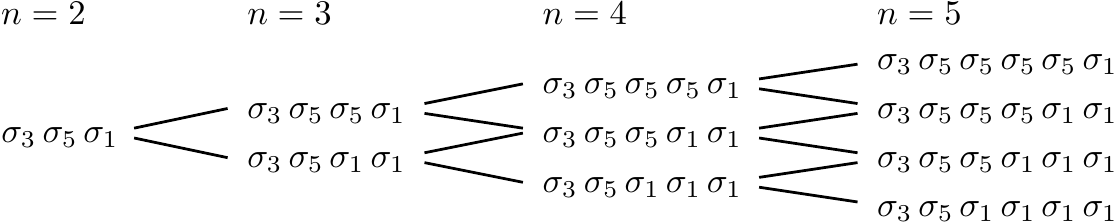}
  \end{center}
  \vspace{2ex}
  We  summarize for (\ref{eq:s2}) in case of $n=5$:

  \begin{itemize}
  \item  The integer $5$ can be partitioned in the sum of two integers
    $i_1$, $i_2$ in the following two ways: $5 = 1 + 4 =  3 + 2$.
    As we have to take the ordering of $i_1$ and $i_2$ into
    account (and hence, for example, \brac{4,1} differs from
    \brac{1,4}) we obtain the following pairs of integers:
    \begin{displaymath}
      \left\{ \brac{1,4},\; \brac{4,1},\; \brac{3,2},\; \brac{2,3}
      \right\}.
    \end{displaymath}
  \item This yields the sign patterns
    \begin{gather*}
      \sigma_3\, \rbrac{\sigma_5}_4\, \rbrac{\sigma_1}_1,\; \
      \sigma_3\, \rbrac{\sigma_5}_1\, \rbrac{\sigma_1}_4,\; \
      \sigma_3\, \rbrac{\sigma_5}_3\, \rbrac{\sigma_1}_2,\; \
      \sigma_3\, \rbrac{\sigma_5}_2\, \rbrac{\sigma_1}_3
    \end{gather*}
  \end{itemize}\hbm
\end{example}
Concerning the number of elements $\sigma^{(n)}\in\{-1,1\}^{3n+3}$ 
corresponding to (\ref{eq:s1}) -- (\ref{eq:s7}) we observe 
\begin{prop}
  \label{lem:nr_of_elements} \mbox{} \\
  For  $n\geq 3$ there exist $(n-1)(n+2)$  elements  in
  $\{-1,1\}^{3\times (n+1)}$ of the form (\ref{eq:s1}) --
  (\ref{eq:s7}).
\end{prop}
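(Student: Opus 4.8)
The plan is to count each of the seven families $(s_1)$--$(s_7)$ separately, check that distinct parameter choices inside a family yield distinct sign pattern matrices, check that the seven families are pairwise disjoint, and then add the seven counts; the total should come out to $(n-1)(n+2)$.

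First I would record that in each of $(s_1)$--$(s_7)$ the building blocks $\sigma_j$ that actually occur are pairwise distinct vectors in $\{-1,1\}^3$ (recall $\sigma_1=(1,1,1)^T$, $\sigma_2=(1,-1,1)^T$, $\sigma_3=(-1,-1,-1)^T$, $\sigma_4=(1,-1,-1)^T$, $\sigma_5=(-1,1,-1)^T$). Hence, reading a sign pattern matrix of a given family from left to right, the lengths of its maximal constant runs determine the defining tuple $(i_1,\dots)$ uniquely, so the parametrization of each family is injective. Counting the admissible tuples then gives: $n-1$ matrices for each of $(s_1)$, $(s_2)$, $(s_4)$ (compositions $i_1+i_2=n$ with $i_1,i_2\ge1$); exactly one matrix for $(s_3)$ (forced $i_1=1$, $i_2=n-1$); $\binom{n-1}{2}$ matrices for each of $(s_5)$, $(s_6)$ (compositions $i_1+i_2+i_3=n$ with all parts $\ge1$); and $n-2$ matrices for $(s_7)$ (forced $i_2=1$, then $i_1+i_3=n-1$ with $i_1,i_3\ge1$).

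Next I would verify pairwise disjointness by a short finite case check organized by the first column. The families $(s_1)$, $(s_5)$ begin with $\sigma_2$; $(s_2)$, $(s_6)$ begin with $\sigma_3$; $(s_3)$, $(s_4)$, $(s_7)$ begin with $\sigma_4$; so only same-group pairs can collide. Within the $\sigma_2$-group, a matrix of type $(s_1)$ has exactly one column equal to $\sigma_2$ while one of type $(s_5)$ has $1+i_2\ge2$ of them. Within the $\sigma_3$-group, the leading maximal $\sigma_3$-run has length $1$ in $(s_2)$ and length $1+i_1\ge2$ in $(s_6)$. Within the $\sigma_4$-group, the second column equals $\sigma_5$ in $(s_3)$ but $\sigma_3$ in both $(s_4)$ and $(s_7)$ (since there $i_1\ge1$), while $(s_4)$ has no column equal to $\sigma_5$ and $(s_7)$ has exactly one; this separates all three. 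Together with the injectivity inside each family, this shows the $(n-1)+(n-1)+1+(n-1)+\binom{n-1}{2}+\binom{n-1}{2}+(n-2)$ matrices are pairwise distinct.

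Finally, adding up, $3(n-1)+1+(n-2)+2\binom{n-1}{2}=4(n-1)+(n-1)(n-2)=(n-1)(n+2)$, which is the assertion (the same bookkeeping with $(s_5)$--$(s_7)$ dropped recovers the value $4=(2-1)(2+2)$ for $n=2$). I do not anticipate a real obstacle here; the only point needing a little care is the edge-case reading of the leading blocks of $(s_6)$ and $(s_7)$, where $\sigma_3\,\rbrac{\sigma_3}_{i_1}$ must be treated as a single run of $i_1+1\ge2$ equal columns, and applying the composition count $\binom{n-1}{k-1}$ with the correct number $k$ of parts in $(s_5)$--$(s_7)$.
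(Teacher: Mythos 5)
Your proposal is correct and follows essentially the same route as the paper: count the ordered compositions parametrizing each of the seven families and sum, obtaining $3(n-1)+1+(n-2)+2\binom{n-1}{2}=(n-1)(n+2)$. The only difference is that you additionally verify injectivity of each family's parametrization and pairwise disjointness of the seven families --- points the paper's proof leaves implicit --- which is a welcome but not essential refinement of the same argument.
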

\begin{proof}
  We note that for $n\geq 3$ fixed formulae (\ref{eq:s5}) and
  (\ref{eq:s6}) each yield as many 
  elements as there are partitions of $n$ into three positive integers
  (taking order into account). Similarly (\ref{eq:s1}), (\ref{eq:s2})
  and (\ref{eq:s4}) each yield as many elements as there are
  partitions of $n$ into two positive integers (taking order into
  account), while (\ref{eq:s7}) yields as many elements as there are
  partitions of $n-1$ into two positive integers (taking order into
  account). Finally, (\ref{eq:s3}) yields one element.

  First we count the number of ways an integer can be partitioned into
  the sum of two positive integers (if order is taken into
  account). One has
  \begin{align*}
    n &= \brac{n-1} + 1 
    = \brac{n-2} +2 
    =\, \cdots \, 
    = \brac{n-\brac{n-1}} + \brac{n-1}.
  \end{align*}
  Note that there is no other way to partition $n$ into the sum of two
  positive integers. Hence we conclude that there is a total of $n-1$
  ways to partition $n$ into two positive integers. \\
  Second we count the number of partitions of $n$ into three positive
  integers. To this end we observe:
  \begin{align*}
    n &= \brac{n-2} + \brac{1 + 1} \\
    &= \brac{n-3} + \left\{\text{$2$ ways to write 3 as the sum of two
        positive integers}\right\} \\
    &\vdots \\
    &= \brac{n-j} + \left\{\text{$j-1$ ways to write $j$ as the sum of
        two positive integers}\right\} \\
    &\vdots \\
    &= \brac{n-\brac{n-1}} + \left\{\text{$n-2$ ways to write $n-1$ as
        the sum of two positive integers}\right\}
  \end{align*}
  Consequently there is a total of
  \begin{displaymath}
    \sum_{j=2}^{n-1}\, \brac{j-1}=\frac{1}{2}\, \brac{n-2}\, \brac{n-1}
  \end{displaymath}
  ways to partition $n$ into the sum of three positive
  integers. Now we can sum over the contribution of each of the
  formulae 
  \begin{displaymath}
    \underbrace{(n-2)(n-1)}_{\text{(\ref{eq:s5}) \&
        (\ref{eq:s6})}}
    + \underbrace{3\, \brac{n-1}}_{\text{(\ref{eq:s1}),
        (\ref{eq:s2}) \& (\ref{eq:s4})}}
    + \underbrace{(n-2)}_{\text{(\ref{eq:s7})}}
    \, + \, \underbrace{(\phantom{n} 1\phantom{2})}_{\text{(\ref{eq:s3})}}
    = (n-1)(n+2)\ .
  \end{displaymath}
\end{proof}

\vspace*{5mm}\section{Proof of Theorem~\ref{theo:poly_sol}}
\label{sec:proof_poly_sol}

To prove the Theorem~\ref{theo:poly_sol} we show the following implications:
\begin{displaymath}
  (A) \Rightarrow (D) \Rightarrow (C) \Rightarrow (B) \Rightarrow
  (A)
\end{displaymath}
  
\uu{(A) \Rightarrow (D):} \\ 
Suppose there exists $a$, $b\in\Rp^{3n+3}$ and
$\kappa\in\Rp^{3n+3}$ such that (\ref{eq:multistat_ode_x0}) \&
(\ref{eq:multistat_ode_x1}) hold. Then, by Theorem~4.3 of
\cite{ToricRN} one has
\begin{displaymath}
  \ln b - \ln a \in \im\brac{A^T},
\end{displaymath}
with
\begin{displaymath}
  A = \left[
    \begin{array}{ccc|ccc|ccc|c|ccc}
      1 & 0 & 0 & 1 & 1 & 1 & 2 & 2 & 2 & \dots & n & n & n \\
      1 & 0 & 1 & 1 & 0 & 1 & 1 & 0 & 1 & \dots & 1 & 0 & 1 \\
      0 & 1 & 0 & 1 & 1 & 1 & 1 & 1 & 1 & \dots & 1 & 1 & 1  \\
    \end{array}
  \right],
\end{displaymath}
where the columns of $A$ have been reordered to match the ordering
introduced in Table~\ref{tab:VarAssignment} (in the aforementioned
reference \cite{ToricRN}, the columns of $A$ correspond to the
exponents of $t_1$, $t_2$, $t_3$ in the statement of Theorem~4.3,
the matrix itself is displayed in the proof of said Theorem). 
\begin{displaymath}
  A^T=\Mi{n}Q \ \mbox{ for the nonsingular }\
  Q:=\left[\begin{array}{ccc}n&0&1\\1&1&1\\2&1&1\end{array}\right] 
\end{displaymath}
implying $\im\brac{\Mi{n}} = \im\brac{A^T}$
and hence
$(A) \Rightarrow (D)$ with $\mu:= \ln b - \ln a$.
  
\vspace*{1mm}\uu{(D) \Rightarrow (C):}\\
We first show that ${\Yi{n}}^T\, \Mi{n} = \PPi{n}$. To this end 
recall the definition of ${\Yi{n}}^T$ from eq.(\ref{eq:Y_direct})
\begin{displaymath}
  {\Yi{n}}^T = \left[
    \begin{array}{c|c|c|c|c|c|c}
      \Yo{1}^T & 0_{6\cdot 1 \times3} & \multirow{2}*{$0_{6\cdot
          2 \times 3}$} & \multirow{3}*{$0_{6\cdot 3 \times 3} $}&
      \phantom{0_{6\cdot 2 \times 3}} & \phantom{0_{6\cdot 2
          \times 3}} &  \multirow{5}*{$0_{6\cdot n \times 3}$} \\
      \cline{1-2}
      \multicolumn{2}{c|}{\Yo{2}^T} & & & & & \\ \cline{1-3}
      \multicolumn{3}{c|}{\Yo{3}^T} & & & \\
      \cline{1-4} \multicolumn{5}{r|}{\ddots} &  \\ \cline{1-5}
      \multicolumn{6}{c|}{\Yo{n-1}^T} &  \\ \hline
      \multicolumn{7}{c}{\Yo{n}^T}\\
    \end{array}
  \right].
\end{displaymath}
Now it is straightforward to compute that
\begin{displaymath}
  \Yo{i}^T\, \left[
    \begin{array}{c}
      M\brac{0,n} \\ \vdots \\ M\brac{i,n}
    \end{array}
  \right] = \left[
    \begin{array}{ccc}
      \uu{0} & \brac{-i+2}\, \uu{1} & \brac{i-1}\, \uu{1} 
    \end{array}
  \right] =: P\brac{i}\, \in \R^{6\times 3},
\end{displaymath}
where, as in Definition~\ref{def:Pi_and_Mi} we use the symbol
\uu{1} to denote a 6-vector filled with the number $1$ and,
accordingly, we use the symbol \uu{0} to denote a 6-vector 
filled with the number $0$. Thus one arrives at
\begin{displaymath}
  {\Yi{n}}^T\, \Mi{n} = \left[
    \begin{array}{c}
      P\brac{1} \\ \vdots \\ P\brac{n}
    \end{array}
  \right]
\end{displaymath}
Further note that
\begin{displaymath}
  P\brac{i}\, \begin{pmatrix}
    * \\ \xi_1 \\\xi_2
  \end{pmatrix}
  = \left[
    \begin{array}{cc}
      \brac{-i+2}\, \uu{1} & \brac{i-1}\, \uu{1} 
    \end{array}
  \right] \,
  \begin{pmatrix}
    \xi_1 \\\xi_2
  \end{pmatrix}
  = \Pi_0\brac{i}\, 
  \begin{pmatrix}
    \xi_1 \\\xi_2
  \end{pmatrix}
\end{displaymath}
(cf.\ Definition~\ref{def:Pi_and_Mi} and in particular
equation~(\ref{eq:def_Pi_0})).  Thus, given any nonzero
$\mu\in\im\brac{\Mi{n}}$ as
\begin{displaymath}
  \mu=\Mi{n}\, \begin{pmatrix}
    * \\ \xi_1 \\\xi_2
  \end{pmatrix}
\end{displaymath}
one arrives at
\begin{displaymath}
  {\Yi{n}}^T\, \Mi{n}\, \begin{pmatrix}
    * \\ \xi_1 \\\xi_2
  \end{pmatrix} =  \left[
    \begin{array}{c}
      \Pi_0\brac{1} \\ \vdots \\ \Pi_0\brac{n}
    \end{array}
  \right]\, 
  \begin{pmatrix}
    \xi_1 \\\xi_2
  \end{pmatrix} 
  = \PPi{n}\, \xi.
\end{displaymath}
Concerning $\nu$ and $\lambda$ as in (\ref{eq:lambda_free}) \&
(\ref{eq:def_nu_1}) -- (\ref{eq:def_nu_3}) we note that $\xi_1$,
$\xi_2$ from above together with any positive $\lambda$ can be
used to obtain $\nu$ as in (\ref{eq:def_nu_1}) --
(\ref{eq:def_nu_3}). Hence we conclude $(D) \Rightarrow (C)$.

\vspace*{1mm}\uu{(C) \Rightarrow (B):}\\
Let $\mu\in\R^{3n+3}$ with $\mu\neq 0$ and $\xi\in\R^2$ be
given, such that (\ref{eq:mu_lin}) holds. Further let $\nu$ and
$\lambda$ be given that satisfy (\ref{eq:lambda_free}) \&
(\ref{eq:def_nu_1}) -- (\ref{eq:def_nu_3}), that is
$\lambda\in\Rp^{3n}$ and 
\begin{displaymath}
  \nu_{3i} = \lambda_{3i}\, e^{\brac{2-i}\, \xi_1 + \brac{i-1}\,
    \xi_2}, \; \nu_{3i-1} = \lambda_{3i-1}\,
  \frac{\nu_{3i}}{\lambda_{3i}}\; \text{and}\;  \nu_{3i-2} =  
  \lambda_{3i-2}\, \frac{\nu_{3i}}{\lambda_{3i}}, i=1,\, \ldots,
  n.
\end{displaymath}
Observe that these $\nu$, $\lambda$ satisfy
\begin{displaymath}
  \ln\frac{\nu_{3i-2}}{\lambda_{3i-2}} = 
  \ln\frac{\nu_{3i-1}}{\lambda_{3i-1}} =
  \ln\frac{\nu_{3i}}{\lambda_{3i}} = \brac{2-i}\,
  \xi_1+\brac{i-1}\, \xi_2
\end{displaymath}
and hence
\begin{displaymath}
  \ln\frac{\nu_{3i-2}+\nu_{3i}}{\lambda_{3i-2}+\lambda_{3i}} =
  \ln\frac{\nu_{3i}}{\lambda_{3i}}\; \text{and}\; 
  \ln\frac{\nu_{3i-1}+\nu_{3i}}{\lambda_{3i-1}+\lambda_{3i}} =
  \ln\frac{\nu_{3i}}{\lambda_{3i}}.
\end{displaymath}
Define
\begin{displaymath}
  \nu_{\brac{i}} := \brac{\nu_{3i-2},\, \nu_{3i-1},\,
    \nu_{3i}}\; \text{and}\; \lambda_{\brac{i}} :=
  \brac{\lambda_{3i-2},\, \lambda_{3i-1},\, \lambda_{3i}}
\end{displaymath}
and let
\begin{displaymath}
  \nu = \col\brac{\nu_{\brac{1}},\, \ldots,\, \nu_{\brac{n}}}\;
  \text{and}\; \col\brac{\lambda_{\brac{1}},\, \ldots,\,
    \lambda_{\brac{n}}}.
\end{displaymath}
Recall the matrices \Ei{n} and $E$ from eqns.~(\ref{eq:Ei})
\& (\ref{eq:E1}) and observe that
\begin{gather*}
  \ln\frac{\Ei{n}\, \nu}{\Ei{n}\, \lambda} = \brac{\ln\frac{E\,
      \nu_{\brac{1}}}{E\, \lambda_{\brac{1}}},\, \ldots,\, \ln\frac{E\,
      \nu_{\brac{n}}}{E\, \lambda_{\brac{n}}}} \\
  \intertext{and}
  \ln\frac{E\, \nu_{\brac{i}}}{E\, \lambda_{\brac{i}}} = \brac{
    \ln\frac{\nu_{3i-2}+\nu_{3i}}{\lambda_{3i-2}+\lambda_{3i}},\,
    \ln\frac{\nu_{3i-2}}{\lambda_{3i_2}},\,
    \ln\frac{\nu_{3i}}{\lambda_{3i}},\,
    \ln\frac{\nu_{3i-1}+\nu_{3i}}{\lambda_{3i-1}+\lambda_{3i}},\,
    \ln\frac{\nu_{3i-1}}{\lambda_{3i-1}},\,
    \ln\frac{\nu_{3i}}{\lambda_{3i}}}.\\
  \intertext{Hence one has for $\nu$, $\lambda$ from above}
  \ln\frac{E\, \nu_{\brac{i}}}{E\, \lambda_{\brac{i}}} = \left[
    \begin{array}{cc}
      \brac{2-i}\, \uu{1} & \brac{i-1}\, \uu{1}
    \end{array}
  \right]
  \begin{pmatrix}
    \xi_1 \\ \xi_2
  \end{pmatrix}\\
  \intertext{and}
  \ln\frac{\Ei{n}\, \nu}{\Ei{n}\, \lambda} =
  \PPi{n}\, \xi.
\end{gather*}
Consequently, if there exist vectors $\mu\in\R^{3n+3}$, $\mu\neq
0$ and $\xi\in\R^2$ satisfying (\ref{eq:mu_lin}) and vectors $\nu$,
$\lambda$ satisfying (\ref{eq:lambda_free}), (\ref{eq:def_nu_1})
-- (\ref{eq:def_nu_3}), then $\mu$, $\nu$ and $\lambda$ also
satisfy  (\ref{eq:trans_eq}). Note that $\nu$ and $\lambda$
satisfying (\ref{eq:lambda_free}) \& (\ref{eq:def_nu_1}) --
(\ref{eq:def_nu_3}) are positive. Thus we conclude $(C)
\Rightarrow (B)$.

\vspace*{1mm}\uu{(B) \Rightarrow (A):} \\
Finally, assume $\mu\in\R^{3n+3}$ with $\mu\neq 0$ and
$\brac{\nu,\lambda}\in\Rp^{3n} \times \Rp^{3n}$ satisfy
(\ref{eq:trans_eq}). Fix $a\in\Rp^n$ and define
\begin{align*}
  b &:= \diag\brac{e^{\mu}}\, a \\
  \kappa &:= \diag\brac{\Phi^{\brac{n}}\brac{a^{-1}}}\, \Ei{n}\, \lambda.
\end{align*}
Then
\begin{displaymath}
  {\qquad}\Si{n}\, \diag\brac{\kappa}\, \Phi^{\brac{n}}\brac{a} =  \Si{n}\,
  \diag\brac{\Phi^{\brac{n}}\brac{a^{-1}}}\, \diag\brac{\Ei{n}\, \lambda}\,
  \Phi^{\brac{n}}\brac{a} = \Si{n}\, \Ei{n}\, \lambda = 0
\end{displaymath}
and (as, by~(\ref{eq:trans_eq}), ${\Yi{n}}^T\, \mu =
\ln\frac{\Ei{n}\, \nu}{\Ei{n}\, \lambda}$)
\begin{displaymath}
  \Si{n}\, \diag\brac{\kappa}\, \Phi^{\brac{n}}\brac{b} = \Si{n}\,
  \diag\brac{\kappa}\, \diag\brac{\Phi^{\brac{n}}\brac{a}}\,
  e^{{\Yi{n}}^T\, \mu} = \Si{n}\, \Ei{n}\, \nu = 0
\end{displaymath}
and therefore $(B) \Rightarrow (A)$ where we have used
equation~(\ref{eq:YT_mu}) of
Remark~\ref{rem:rkx_Phi_x}.\hbm

\vspace*{5mm}\section{Proof of Theorem~\ref{theo:feasible_patterns}}
\label{sec:proof_theo_patterns}

First, we observe that $\sig{\Mi{n}\, \xi} = \sigma^{(n)}$ implies
$\sig{\Mi{n}\, \brac{-\xi}} = -\sigma^{(n)}$ and that $\Zi{n}\, s = 0$
implies $\Zi{n}\, \brac{-s} = 0$. Hence, if the system (\ref{eq:lin})
associated to $\sigma^{(n)}$ is feasible for ($\xi$, $s$), then the system
(\ref{eq:lin}) associated to $-\sigma^{(n)}$ is feasible for ($-\xi$,
$-s$). Thus $\sigma^{(n)}\in\matMi{n}$ implies
$-\sigma^{(n)}\in\matMi{n}$.
Hence it suffices to show that the linear system (\ref{eq:lin}) is
feasible for elements $\sigma^{(n)}\in\{-1,1\}^{3n+3}$ with $\sigma_{3n}=+1$
if and only if $\sigma^{(n)}$ corresponds to one of the sign pattern
matrices in (\ref{eq:s1}) -- (\ref{eq:s7}).

\vspace*{2mm}
We now turn to the necessity of the sign patterns
(\ref{eq:s1}) -- (\ref{eq:s7}). To this end we parametrize the range
of the regular matrix $M\brac{0,n}$ from eq.~\eqref{eq:def_M0} by
\begin{equation}
w(0):=\begin{bmatrix}y\\x\\y-z\end{bmatrix}
=\begin{bmatrix}
        -1 & -n+1 & \phantom{-}n\\
        \phantom{-}1 & \phantom{-}n & -n\\
        -1 & -n+2 & \phantom{-}n-1
      \end{bmatrix}
      \begin{bmatrix}x+nz\\x+y\\x+y+z\end{bmatrix}\, .
\end{equation}
Consequently, one has for $M\brac{i,n}$, $i=1,....,n$, from eq.~\eqref{eq:def_Mi}
\begin{equation}
w(i)=\begin{bmatrix}w_1(i)\\w_2(i)\\w_3(i)\end{bmatrix}:=       \begin{bmatrix}x+iz+y-z\\x+iz\\x+iz+y-z\end{bmatrix}
=\begin{bmatrix}
0 & -i+2 & \phantom{-}i-1 \\
        1 & \phantom{-}n-i & -n+i\\
        0 & -i+2 & \phantom{-}i-1
        \end{bmatrix}
      \begin{bmatrix}x+nz\\x+y\\x+y+z\end{bmatrix}\, .
\end{equation}
Observe that the components  $w_2(i)$ and 
$w_1(i)=w_3(i)=w_2(i)+(y-z)$ are affine functions of $i$
so sign changes can be easily read off.
We now derive the necessary conditions for  $\sig{w(0)}$ to be one of
the four sign patterns 
$\sigma_1$, $\sigma_2$, $\sigma_4$ and $\sigma_7$ from
\eqref{eq:building_blocks1a} -- \eqref{eq:building_blocks2} under the
side condition that system~\eqref{subprob1} is not unfeasible
(cf. Lemma~\ref{lem:S_feas1}). The indices $i_k$ and $j_k$ to follow
will always be $\geq 1$.
\begin{enumerate}
\item $\sig{w(0)}=\sigma_2$, i.e., $y>0, x<0, y-z>0$:\\
  For $z\leq 0$, the $w_{2i}$ are all negative so that one has
  $\sigma_{20}=\sigma_{21}=\, \cdots \, =\sigma_{2n}=-1$ and
  $\sigma_{10}=\sigma_{30}=1$. By item (2) of Lemma~\ref{lem:S_feas1}, 
  $z\leq 0$ cannot lead to a feasible sign pattern. For $z>0$ the
  possible sign pattern matrices  are 
  \begin{displaymath}
    \left(\sigma_2,[\sigma_3]_{i_1}, [\sigma_1]_{i_2}\right),\ \
    \left(\sigma_2,[\sigma_3]_{j_1}, [\sigma_2]_{j_2},[\sigma_1]_{j_3}\right)
  \end{displaymath}
  for $i_1+i_2=n$ and $j_1+j_2+j_3=n$ where the $\sigma_1$ entries are
  necessary by item (2) of Lemma~\ref{lem:S_feas1}. Observe that the
  sign pattern $(\sigma_2,[\sigma_1]_n)$ is unfeasible by item (1) of
  Lemma~\ref{lem:S_feas1}.

\item $\sig{w(0)}=\sigma_3$, i.e., $y<0, x<0, y-z<0$:\\
  For $z\leq 0$, the components of all the $w(i)$ are of the same
  sign. By item (1) of Lemma~\ref{lem:S_feas1}, $z\leq 0$ cannot lead
  to a feasible sign pattern. For $z>0$, the $w(i)$ may generate the
  sign pattern matrices  
  $$(\sigma_3,[\sigma_5]_{i_1},[\sigma_1]_{i_2}), \ \
  (\sigma_3,[\sigma_3]_{j_1},[\sigma_5]_{j_2},[\sigma_1]_{j_3})$$ 
  with $i_1+i_2=n$ and $j_1+j_2+j_3=n$ where the $\sigma_1$ entries
  are necessary (otherwise the first rows do not offer a sign change). 
  The direct passage from $\sigma_3$ to $\sigma_1$ is obviously 
  impossible since $w_2(i)<0$ implies $w_1(i+1)=w_2(i)+y\leq w_2(i)<0$.

\item $\sig{w(0)}=\sigma_4$, i.e., $x<0, 0<y<z$:\\ 
  At first, possible sign pattern matrices are
  $$\left(\sigma_4,[\sigma_3]_{i_1}, [\sigma_1]_{i_2}\right),\ \
  \left(\sigma_4,[\sigma_5]_{i_1}, [\sigma_1]_{i_2}\right),\ \
  \left(\sigma_4,[\sigma_3]_{j_1}, [\sigma_5]_{j_2},[\sigma_1]_{j_3}\right)$$
  for $i_1+i_2=n$ and $j_1+j_2+j_3=n$ where the $\sigma_1$ entries are
  necessary by item (3) of Lemma~\ref{lem:S_feas1}. Observe that the
  sign pattern $(\sigma_4,[\sigma_1]_n)$ is unfeasible by item (1) of
  Lemma~\ref{lem:S_feas1}. Note that $\sigma_5$ can appear just once
  because of $w_{1,i+1}=w_{2i}+y\geq w_{2i}$.

\item $\sig{w(0)}=\sigma_7$, i.e., $x>0, 0<y<z$:\\
  Because of $z> 0$, the components  $w_{1i}$ are positive for
  $i=0,...,n$. By item (1) of Lemma~\ref{lem:S_feas1}, $\sigma_7$
  cannot lead to a feasible sign pattern.
\end{enumerate}

\vspace{3mm}
We finally establish the sufficiency of the sign patterns
(\ref{eq:s1}) -- (\ref{eq:s7}).
By Lemma~\ref{lem:S_feas1},
these $(s_k)$ are feasible sign patterns for the kernel of $\Zi{n}$.
In order to prove that they are realizable
we present
for each $(s_k)$, $k\in \{1,...,7\}$, a
vector 
\begin{subequations}\label{suffsign}
\begin{equation}\label{suffsign1}
\xi =col(x,x+y,x+y+z)\in \R^3 
\end{equation}
leading to the sign pattern $(s_k)$ for $\Mi{n}\xi$:
\begin{equation}\label{suffsign2}\begin{array}{cllc}
\mbox{For $(s_1)$:}\ &  x=i_2-\frac{3}{4},&y=-n+\frac{5}{4}, &z=1\, .\\[1mm]
\mbox{For $(s_2)$:}\ &  x=n-\frac{1}{4}, &y=-n-i_1+\frac{3}{4}, &z=1\, .\\[1mm]
\mbox{For $(s_3)$:}\ &  x=n-\frac{3}{4},  &y=-n+\frac{1}{4}, &z=1\, .\\[1mm]
\mbox{For $(s_4)$:}\ &  x=i_2-\frac{1}{4},& y=-n+\frac{3}{4}, &z=1\, .\\[1mm]
\mbox{For $(s_5)$:}\ &  x=i_3-\frac{3}{4}, &y=-n+i_2+\frac{5}{4}, &z=1\, .\\[1mm]
\mbox{For $(s_6)$:}\ &  x=i_2+i_3-\frac{1}{4},& y=-n-i_2+\frac{3}{4}, &z=1\, .\\[1mm]
\mbox{For $(s_7)$:}\ &  x=i_2+\frac{1}{4}, &y=-n+\frac{1}{4}, &z=1\, .
\end{array}
\end{equation}
\hbm
\end{subequations}

\section{Discussion}
\label{sec:discussion}

The main topic of this contribution is the existence of
multistationarity for $n$-site sequential distributive
phosphorylation. Via the equations (\ref{eq:def_k}) and
(\ref{eq:def_a_1}) -- (\ref{eq:def_b}) we were able to link the
existence of pairs of steady states ($a$, $b$) and a corresponding
vector of rate constants $\kappa$ to solutions of the linear systems
(\ref{eq:lin}). These systems are uniquely defined by sign patterns
$\sigi{n}\in\{-1,0,1\}^{3n+3}$ and hence the existence of a single
sign pattern defining a feasible system (\ref{eq:lin}) is
sufficient. Concerning sign patterns $\sigma\in\{-1,1\}^{3n+3}$ where
(\ref{eq:lin}) is feasible, Theorem~\ref{theo:feasible_patterns}
establishes four such sign patterns for $n=2$ and $(n-1)\, (n+2)$ for
$n\geq 3$.  

Unfortunately the biological interpretation of  elements $\sigma$
where (\ref{eq:lin}) is feasible is unclear (unlike in the case of 
multistationarity itself, that is usually associated with a specific
biological function). If different elements could be associated to
different biological functions, this would limit the number of
(different) biological functions a network can perform and hence be of
significant biological interest. Moreover, knowledge of all sign
patterns $\sigma$ defining feasible (\ref{eq:lin}) would then
correspond to knowledge of all biological functions a network is
capable of.

However, the following observations concerning pairs of steady states
($a$, $b$) can easily be made by inspection of the feasible sign patterns
(\ref{eq:s1}) -- (\ref{eq:s7}) and the structure of the matrices
\Mi{i} from (\ref{eq:def_Mi}): (i) every sign pattern ends with the
triplet $\brac{1,1,1}^T$ or $\brac{-1,-1,-1}^T$ and hence the last three
components of steady state $b$ are greater (smaller) than those of $a$
(recall that $\mu=\ln\frac{b_i}{a_i}$ and hence $\mu_i>0$ implies
$b_i>a_i$ and $\mu_i<0$ implies $b_i<a_i$). (ii) The first and the
third row of the matrices \Mi{i} are identical. Hence $\mu_{3i+1} =
\mu_{3i+3}$, $i=1$, \ldots, $n$. For a pair of steady states $(a,b)$
these correspond to the ratio of steady state concentrations of
kinase-substrate complex $A_{i-1\, p}\, E_1$ (in case of $\mu_{3i+1}$)
and phosphatase-substrate complex $A_{i_p}\, E_2$ (in case of
$\mu_{3i+3}$, cf. Table~\ref{tab:VarAssignment}).
Hence, for any pair
of steady states, the ratio of the steady state concentrations of
kinase-substrate complexes  equals that of phosphatase-substrate
complexes.

Finally we'd like to discuss two observations related to the formulae
(\ref{eq:def_k})  and (\ref{eq:def_a_1}) -- (\ref{eq:def_b}). 
Consider a sign pattern matrix $\sigma$ as in (\ref{eq:s1}) --
(\ref{eq:s7}) and let $\sigma^{(n)}$ be the corresponding sign
pattern vector. Pick vectors $\mu\in\im\brac{\Mi{n}}$,
$s\in\im\brac{\Si{n}}$ with $\sig{s} = \sig{\mu}=\sigma^{(n)}$. Let
($a^\sigma$, $b^\sigma$) be the pair of steady states defined by
$\mu$, $s$ via (\ref{eq:def_a_1}) -- (\ref{eq:def_b}). Let
($a^{-\sigma}$, $b^{-\sigma}$) be the pair of steady states defined
by $-\mu$ and $-s$. Then it is not difficult to see that
$a^{-\sigma}=b^{\sigma}$ and $b^{-\sigma} = a^{\sigma}$ (cf. a
similar discussion in \cite{cc-08}). Further note that a given pair
($a$, $b$), obtained via (\ref{eq:def_a_1}) -- (\ref{eq:def_b}), is
a pair of steady states for all vectors of rate constants defined by
the formula~(\ref{eq:def_k}), where $\lambda\in\Rp^{3n}$ is free.

The formula~(\ref{eq:def_k}) for the vectors of rate constants may
be interesting for two reasons: first, it allows to replace the $6n$
rate constants $k_i$, $l_i$ by the $3n$ \lq coordinates\rq{}
$\lambda_i$ and hence a reduction of the degrees of freedom in the
dynamical system (\ref{eq:ode_def}). Second, it  means that
multistationarity is robust with respect to perturbations within the
image of the map (\ref{eq:def_k}) and fragile with respect to
transversal perturbations. In a future publication we will
explore the image of this map with the aim of establishing parameter
values for multistationarity in biologically meaningful domains.

\subsection{Acknowledgments}

KH and CC acknowledge financial support from the International Max
Planck Research School in Magdeburg and the Research Center 'Dynamic
Systems' of the Ministry of Education of Saxony-Anhalt, respectively.
Finally, we'd like to thank the diligent reviewers for their
valuable suggestions.


\end{document}